\newtheorem{thm}{Theorem}%[section]
\newtheorem{cor}[thm]{Corollary}
\newtheorem{lem}[thm]{Lemma}
\newtheorem{prop}[thm]{Proposition}
\theoremstyle{definition}
\theoremstyle{remark}
\newtheorem{rem}{Remark}
\newcommand{\mb}{\mathbf}
\newcommand{\half}{\tfrac{1}{2}}
\DeclareMathOperator*{\rmd}{d}
\newcommand{\liu}[1]{\textcolor{blue}{[Liu: #1]}}
\newcommand{\ayfer}[1]{\textcolor{red}{[Ayfer: #1]}}
\begin{document}

\title{Capacity Upper Bounds for the Relay Channel via Reverse Hypercontractivity
}

\author{Jingbo Liu, \quad Ayfer Ozgur\\
{jingbo}@mit.edu, Institute for Data, Systems, and Society, Massachusetts Institute of Technology\\
{aozgur}@stanford.edu, Electrical Engineering, Stanford University\\
}

\maketitle

\begin{abstract}
The primitive relay channel, introduced by Cover in 1987, is the simplest single-source single-destination network model that captures some of the most essential features and challenges of relaying in wireless networks. Recently, Wu and Ozgur developed upper bounds on the capacity of this channel that are tighter than the cutset bound. In this paper, we recover, generalize and improve  their upper bounds with simpler proofs that rely on a converse technique recently introduced by  Liu, van Handel and Verd\'u that builds on reverse hypercontractivity. To our knowledge, this is the first application of reverse hypercontractivity for proving first-order converses in network information theory.
\end{abstract}

\begin{IEEEkeywords}
Shannon theory, 
Relay channel,
Reverse hypercontractivity, 
Markov semigroups,
Converses,
Concentration of measure
\end{IEEEkeywords}

\section{Introduction}
The primitive relay channel, introduced by Cover in 1987 \cite{cover} models the communication scenario where a source-destination pair is assisted by a single relay which is connected to the destination with an independent channel of some finite capacity \cite{cover}. The source's input $X$ is received by the relay $Z$ and the destination $Y$ through a channel  $p(y,z|x)$, and the relay $Z$ can communicate to the destination $Y$ via an error-free digital link of rate $C_0$. See Figure~\ref{F:primitive}. As noted by Kim \cite{Kim}, \emph{the primitive relay channel is the simplest model that captures the most essential features and challenges of relaying in wireless networks}. It can be regarded as the simplest network model that intertwines channel coding with source coding. \emph{On the one hand, it is the simplest channel coding problem (from the source transmitter's point of view) with a source coding constraint; on the other hand, it is the simplest source coding problem (from the relay's point of view) for a channel code.} As such, even-though its capacity remains unknown, it has served as a good testbed for developing new relay coding schemes as well as new converse techniques  over the last three decades \cite{Zhang, Kim, mod,  Ulukus, muriel, Xue, WuOzgurXie, WuOzgur,WuOzgur-general, WuBarnesOzgur, WuBarnesOzgur-general, natasha}). \footnote{Strictly speaking, Cover \cite{cover} considers a special case of the primitive relay channel in Figure~\ref{F:primitive} where the channels from $X$ to $Y$ and $Z$ are independent and symmetric, i.e. $Y$ and $Z$ are conditionally independent given $X$ and  $P_{Y|X}=P_{Z|X}$, where the first assumption appears to be more important than the second. The more general  formulation in Figure~\ref{F:primitive} and the term ``primitive'' appears in \cite{Kim}. However, many of the papers we overview next as well as the current paper consider Cover's original formulation. In a sense, Cover's  formulation is the simplest setting to which Kim's comment applies. It is the simplest setting which combines channel and source coding and as such encapsulates most of the difficulties in a relay channel. This as also evidenced by the fact that the capacity of this simplest setting remains open since Cover, 1987.}
\begin{figure}[hbt]
\centering
\includegraphics[width=0.3\textwidth]{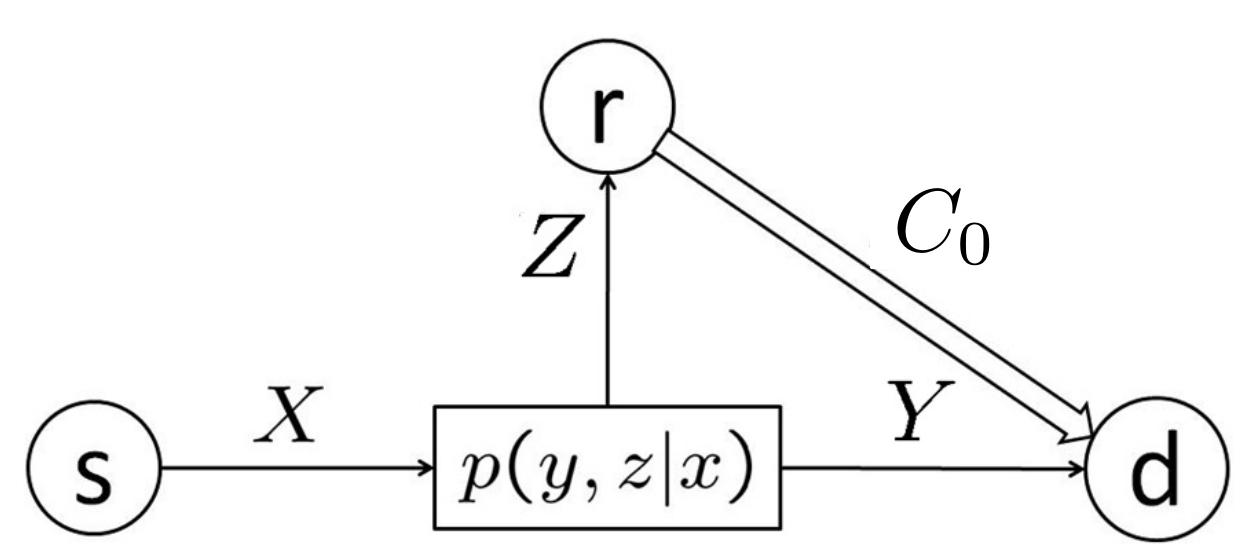}
\caption{Primitive relay channel.}
 \label{F:primitive}
\end{figure}

The classical upper bound on the capacity of the  relay channel, as well as its primitive version, is the so-called cutset bound developed by Cover and El Gamal in 1979 \cite{CoverElgamal}. It bounds the capacity of a network by its minimal cut capacity, in a flavor similar to the famous max-flow min-cut theorem  for graphical networks \cite{FF}.
Athough the cutset bound is known to be tight for most special classes of relay channels for which the capacity is known, such as the physically degraded and reversely degraded  \cite{CoverElgamal}, orthogonal \cite{orthogonal}, semi-deterministic \cite{semideterministic}, and deterministic
\cite{deterministic} relay channels,  its suboptimality in the general case has been established in the context of the primitive relay channel.
In particular, when the channel in Figure~\ref{F:primitive} is discrete and $Y$ and $Z$ are conditionally independent given $X$, and $Y$ is a stochastically degraded version of $Z$ , Zhang \cite{Zhang} uses the blowing-up lemma \cite{Marton} to show that the capacity of this channel can be strictly smaller than the cutset bound in certain regimes. However, Zhang's result is limited to proving that the inequality between the capacity and the cutset bound can be strict in certain cases and does not provide an explicit bound (tighter than the cutset bound) which can be used to benchmark the performance of achievable schemes.  For a special case of the discrete primitive relay channel where the noise for the $X$-$Y$ link is modulo additive and $Z$ is a corrupted version of this noise,  Aleksic, Razaghi and Yu characterize the capacity and show that it is strictly lower than the cutset bound \cite{mod} (see also \cite{Ulukus}). While this result provides an exact capacity characterization for a non-trivial special case, it builds strongly on the peculiarity of the channel model and in this respect its scope is more limited than Zhang's result. A stronger upper bound for Zhang's setting has been established by Xue in \cite{Xue}, where the gap to the cutset bound is related to the reliability function of the  $X$-$Y$ link. More recently, Wu, Ozgur and Xie in \cite{WuOzgurXie} and Wu and Ozgur in \cite{WuOzgur} develop new upper bounds on the capacity of primitive relay channel focusing on the case when the channels from $X$ to $Y$ and $Z$ are independent and symmetric (as in Cover's original formulation in \cite{cover}). In particular, \cite{WuOzgur}  considers the Gaussian  case, where $Y$ and $Z$ are obtained by passing $X$ through independent symmetric Gaussian channels (see \cite{WuOzgur-general} for an extension to the asymmetric case), and  develop the first upper bounds that are tighter than the cutset bound for this canonical model for wireless relay networks. For discrete channels, the bounds in \cite{WuOzgur} significantly improve on the earlier bound in \cite{Xue}. 
One should note that all these bounds  \cite{WuOzgurXie, WuOzgur,WuOzgur-general, Zhang, Xue} rely on the same key ingredient, concentration of measure and specifically the  blowing-up lemma  in either discrete or Gaussian space  \cite{Marton, Talagrand}, even though each of these works  develop a new and different argument for invoking this ingredient; for example, the earlier arguments for discrete channels in \cite{Zhang, Xue} do not extend to continuous, in particular Gaussian, channels considered in  \cite{WuOzgur}.

In this paper, we prove upper bounds on the capacity of the primitive relay channel by using a different technique first  introduced by  Liu, van Handel and Verd\'u in \cite{ISIT_lhv2017_no_url} which relies on the reverse hypercontractivity of Markov semigroups. 
In a unified approach (by using different Markov semigroups for different channel models),
we prove sharper bounds for Gaussian channels and channels with bounded density (which includes all discrete memoryless channels) than previous counterparts in the Gaussian \cite{WuOzgur}  and the discrete memoryless cases \cite{WuOzgurXie}.
The reverse hypercontractivity approach has been integrated with the functional representations of information measures to successfully resolve the second-order scaling of certain multiuser information theory problems in the nonvanishing error regime \cite{lcv_onecomm,ISIT_lccv_smooth2016,ISIT_lhv2017_no_url,liu_thesis,liu18}. 
We remark that  reverse hypercontractivity has gained some recent interest within the information theory community.
For example \cite{kamath15}\cite{lccv16}\cite{beigi16} studied  equivalent formulations for reverse hypercontractivity and \cite{nair_e}
computed the reverse hypercontractivity region for the erasure channel. However, the application of reverse hypercontractivity to obtain bounds on the fundamental limits of operational problems has been limited beyond \cite{ISIT_lhv2017_no_url}, and  to the best of our knowledge the current paper provides the first application of reverse hypercontractivity to derive  bounds on the (first-order) capacity of multi-user networks.\footnote{Polyanskiy and Wu  use the transportation-cost inequality to prove a first-order converse for the interference channel in \cite{polyanskiy2016wasserstein}; since transportation-cost inequalities are known to imply concentration of measure and the blowing-up lemma in particular, this result can be thought of as an indirect application of concentration of measure to information-theoretic converses.} %\ayfer{How about results of Polyanskiy, Wu or Chandra Nair? I think they are not using reverse hypercontractivity as directly as we do but may be related at some level?} \liu{Yes we can add citations here. Remark that Polyanskiy and Wu \cite{polyanskiy2016wasserstein} used the transportation-cost inequality to prove a first-order converse for the interference channel; since transportation-cost inequalities are known to imply concentration measure \cite{marton1966simple}, this can be thought of as an indirect application of concentration of measure in information-theoretic converse.On the other hand, there was some recent interest in reverse hypercontractivity in the IT society, for example Nair-Wang \cite{nair_e} computed the reverse hypercontractivity region for the erasure channel; but the application in the fundamental limits in operational problems is extremely limited.}
%\liu{Let's agree on the terminologies: ``blowing-up'' instead of ``blowing up''; ``cutset'' instead of ``cut set''.}
%\ayfer{Can we say something in the lines of what you were saying in your talk, elaborate on why we think reverse hyperconractivity is a promising tool for network information theory and why IT people should look more into it, such as ``Reverse hypercontractivity appears to be well-suited and powerful tool for information theoretic problems as it characterizes how... We hope the current paper can serve as a starting point for a more systematic and common use of reverse hypercontractivity in network information theoretic converses.  }
%\liu{I agree with the sentences you are adding. Despite that the reverse hypercontractivity has not been used as often as the (forward) hypercontractivity, there are certain features that make th reverse one more appealing. For example, Mossel et al.\ \cite{mossel2013reverse} showed a universal bound for the reverse hypercontractivity of a simple semigroup (see Lemma~\ref{lem_mos}) whereas the hypercontractivity constant for same semigroup is know to blow-up as the minimum probability vanishes.}

We note that in more recent work \cite{WuBarnesOzgur}, Wu, Barnes and Ozgur develop a tighter upper bound on the capacity of the Gaussian primitive relay channel which significantly improves on \cite{WuOzgur}  and is strong enough to resolve the open problem posed by Cover in \cite{cover}, which he calls ``The Capacity of the Relay Channel''. Their proof builds on a rearrangement inequality on the sphere \cite{baernstein}. It would be interesting to see if this stronger result can be recovered with simpler proofs based on the reverse hypercontractivity approach we develop in the current paper.

The paper is organized as follows. 
Section~\ref{sec_prob} reviews the precise formulation of the primary relay channel problem.
In Section~\ref{sec_pre} we recall the reverse hypercontractivity results for the Ornstein-Uhlenbeck and the semi-simple semigroup we use in this paper. 
The main results are presented in Section~\ref{sec_main} and proved in Section~\ref{sec_proof}.

\section{Problem Formulation}\label{sec_prob}
%\liu{Since this is a short paper, shall we merge this section with the introduction?}\ayfer{I think the intro and the problem formulation belong to two different sections, their styles are different and having two sections gives more structure. I would keep it this way.}
Consider a symmetric primitive relay channel as depicted in Fig. \ref{F:primitive}. The source's input $X\in\mathcal{X} $ is received by the relay $Z\in\Omega$ and the destination $Y\in\Omega$  through symmetric and independent channels $W:=P_{Y|X}=P_{Z|X}$ and $P_{Y,Z|X}=P_{Y|X}P_{Z|X}$. The relay $Z$ can communicate to the destination $Y$ via an error-free digital link of rate $C_0$ nats/ channel use.

For this channel, a code of rate $R$ and blocklength $n$, denoted by $$(\mathcal{C}_{(n,R)}, f_n(z^n), g_n(y^n,f_n(z^n))), \mbox{ or simply, } (\mathcal{C}_{(n,R)}, f_n, g_n), $$
consists of the following:
\begin{enumerate}
  \item A codebook at the source $X$: $\mathcal{C}_{(n,R)}=\{x^n (m)\in\mathcal{X}^n, m\in \{1,2,\ldots, \lceil e^{nR}\rceil\} \};$
  \item An encoding function at the relay $Z$: $f_n: \Omega^n \rightarrow \{1,2,\ldots, \lceil e^{nC_0}\rceil \};$
  \item A decoding function at the destination $Y$: $g_n: \Omega^n  \times \{1,2,\ldots, \lceil e^{nC_0}\rceil\}  \rightarrow \{1,2,\ldots, \lceil e^{nR}\rceil\}.$
\end{enumerate}

The average probability of error of the code is defined as
\begin{align}
P_e^{(n)}=\mbox{Pr}(g_n(Y^n,f_n(Z^n)) \neq M ),
\label{e_pe}
\end{align}
where the message $M$ is assumed to be uniformly drawn from the message set $ \{1,2,\ldots, \lceil e^{nR}\rceil\}$. A rate $R$ is said to be achievable if there exists a sequence of codes
$$\{(\mathcal{C}_{(n,R)}, f_n, g_n)\}_{n=1}^{\infty}$$
such that the average probability of error $P_e^{(n)} \to 0$ as $n \to \infty$. The capacity of the primitive relay channel is the supremum of all achievable rates, denoted by $C(C_0)$.

The following proposition summarizes an intermediate step in the derivation of the cutset bound \cite{CoverElgamal}.
%\cout{\begin{prop}
%If a rate $R$ is achievable over the symmetric primitive relay channel with $P_{Y|X}=P_{Z|X}=W$, then for any $\mu >0$ and sufficiently large $n$,
%\begin{align*}
 %R  & \leq I(X_Q;Y_Q,Z_Q)+\mu  \nonumber \\
%R   & \leq   I(X_Q;Y_Q)+\frac{1}{n}H(I|Y^n) -\frac{1}{n}H(I|X^n) +\mu  \nonumber 
%\end{align*}
%where $Y_Q-X_Q-Z_Q$, $P_{Y_Q|X_Q}=P_{Z_Q|X_Q}=W$ and $E[X_Q^2]\leq P$. Similarly, $I=f(Z^n)-Z^n-X^n-Y^n$, where $P_{Y^n|X^n=x^n}=P_{Z^n|X^n=x^n}=W_{x^n}:=\otimes_{i=1}^n W_{x_i}$ and $I\in[0:2^{nC_0}]$ and $E[\|X^n\|^2]= nP$.
%\end{prop}
%}
%\liu{I don't think the transition from \eqref{e6} to the Q-representation is correct, since there is no auxiliary random variable here and we cannot do the conventional thing and the mutual information for the mixture distribution can be strictly smaller. I edited the proposition to the following:}\ayfer{I am not sure I understand what you mean. One can show that $I(X_Q;Y_Q)\geq I(X_Q;Y_Q|Q)$ and similarly for the other term. I do not see why a statement of this form should not hold but I fine with the next statement too, it is up to you.}
\begin{prop}\label{main:prop}
Consider a symmetric primitive relay channel where $W:=P_{Y|X}=P_{Z|X}$.
Let $n$ be a positive integer and $P_{Y^n|X^n=x^n}=P_{Z^n|X^n=x^n}=W_{x^n}:=\otimes_{i=1}^n W_{x_i}$.
Suppose that there exist encoding and decoding schemes with error probability  $\epsilon:=P_e^{(n)}$ (see definition in \eqref{e_pe}).
Let $Q$ be equiprobable on $\{1,\dots,n\}$ and independent of $(X^n,Y^n,Z^n)$.
Then 
%\liu{Since you used it somewhere later I changed to the Q representation.}
\begin{align}
 R  & \leq I(X_Q;Y_Q,Z_Q)+\mu(\epsilon)  \label{eq1} \\
R   & \leq  I(X_Q;Y_Q)+\frac{1}{n}H(I|Y^n) -\frac{1}{n}H(I|X^n) +\mu(\epsilon)  \label{eq2} 
\end{align}
where $\mu(\epsilon)\to 0$ as $\epsilon\to 0$,
$X^n$ is the random codeword,
and $P_{Y^n|X^n=x^n}=P_{Z^n|X^n=x^n}=W_{x^n}:=\otimes_{i=1}^n W_{x_i}$,
$I=f(Z^n)-Z^n-X^n-Y^n$.
%$Q$ is equiprobable on $\{1,\dots,n\}$ and independent of $(X^n,Y^n,Z^n,I)$.
\end{prop}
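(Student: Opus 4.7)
The inequalities are a standard cutset-style bound, and the plan is to combine Fano's inequality with the Markov structure $I = f(Z^n) - Z^n - X^n - Y^n$ induced by the symmetric primitive relay channel, and then single-letterize via concavity of mutual information in the input distribution.

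First I would invoke Fano's inequality at the decoder: since $\hat M = g_n(Y^n, I)$ has error probability $\epsilon$, we get $H(M \mid Y^n, I) \le n\mu(\epsilon)$ for some $\mu(\epsilon)\to 0$. Writing $nR = H(M)$, this yields
\begin{equation*}
nR \le I(M; Y^n, I) + n\mu(\epsilon).
\end{equation*}

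For \eqref{eq1}, I would discard the source coding at the relay: because $I = f(Z^n)$, the data-processing inequality along $M - X^n - (Y^n,Z^n) - (Y^n,I)$ gives $I(M; Y^n, I) \le I(X^n; Y^n, Z^n)$. Memorylessness ($P_{Y^n Z^n \mid X^n}=\otimes_i W_{x_i}\otimes W_{x_i}$) implies $H(Y^n,Z^n\mid X^n)=\sum_i H(Y_i,Z_i\mid X_i)$, so subadditivity of joint entropy yields $I(X^n;Y^n,Z^n)\le \sum_i I(X_i;Y_i,Z_i)$. Introducing $Q$ uniform on $\{1,\dots,n\}$ and independent of everything else, the average $\frac{1}{n}\sum_i I(X_i;Y_i,Z_i)$ is then upper bounded by $I(X_Q;Y_Q,Z_Q)$ using concavity of $P_X \mapsto I(X;Y,Z)$ for a fixed channel $W_x\otimes W_x$, which is the standard step turning the empirical average into a bound evaluated at the mixture input.

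For \eqref{eq2}, I would split
\begin{equation*}
I(M;Y^n,I) = I(M;Y^n) + I(M;I\mid Y^n),
\end{equation*}
bound the first term by $I(X^n;Y^n)$ via data processing, and single-letterize as before. For the second term I would use the Markov chain from the statement: since $Y$ and $Z$ are conditionally independent given $X$, and $X^n$ determines the conditional law, $Z^n$ is independent of $(Y^n,M)$ conditioned on $X^n$, hence so is $I = f(Z^n)$. This gives $H(I\mid M,Y^n,X^n) = H(I\mid X^n)$. Combining with $H(I\mid M,Y^n)\ge H(I\mid M,Y^n,X^n)$, we get
\begin{equation*}
I(M;I\mid Y^n) \;=\; H(I\mid Y^n) - H(I\mid M,Y^n) \;\le\; H(I\mid Y^n) - H(I\mid X^n).
\end{equation*}
Dividing through by $n$ and bounding $\frac{1}{n}I(X^n;Y^n)\le I(X_Q;Y_Q)$ (again by concavity) yields \eqref{eq2}.

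There is no serious obstacle here; the argument is the classical cutset derivation. The only point requiring any care is the identity $H(I\mid M,Y^n,X^n) = H(I\mid X^n)$, which is exactly where the symmetric primitive structure $P_{Y,Z\mid X}=P_{Y\mid X}P_{Z\mid X}$ enters: it is what strengthens the trivial bound $I(M;I\mid Y^n)\le H(I\mid Y^n)$ into the form $H(I\mid Y^n)-H(I\mid X^n)$ that makes \eqref{eq2} useful, since the negative term $-\tfrac{1}{n}H(I\mid X^n)$ is precisely what later hypercontractivity-type arguments will control.
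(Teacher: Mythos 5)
Your argument is correct and matches the paper's proof in all essentials: Fano's inequality to get $nR \le I(M;Y^n,I)+n\mu(\epsilon)$, data processing to pass to $X^n$, the chain-rule split $I(\,\cdot\,;Y^n,I)=I(\,\cdot\,;Y^n)+H(I|Y^n)-H(I|\,\cdot\,,Y^n)$, the Markov/conditional-independence structure to drop $Y^n$ (and $M$) from the conditioning in the last term, and single-letterization via $Q$ using $Q-X_Q-(Y_Q,Z_Q)$. The only cosmetic difference is that the paper substitutes $X^n$ for $M$ at the outset and invokes the Markov chain $I-X^n-Y^n$ directly, whereas you keep $M$ one step longer and argue $I\perp(M,Y^n)\mid X^n$; these are equivalent.
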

For completeness we include the short proof here:
\begin{proof}
Using the Fano inequality and the  chain rules, we have
\begin{align}
 R&\le \frac{1}{n}I(X^n;I,Y^n)+\mu(\epsilon)
\\
&\le \frac{1}{n}I(X^n;Z^n,Y^n)+\mu(\epsilon)
\\
&= \frac{1}{n}\sum_{i=1}^nI(X_i;Z_i,Y_i)-\frac{1}{n}D(P_{Y^nZ^n}\|\otimes_{i=1}^nP_{Y_iZ_i})+\mu(\epsilon)
\\
&\le \frac{1}{n}\sum_{i=1}^nI(X_i;Z_i,Y_i)+\mu(\epsilon)\label{e6}
\\
&=I(X_Q;Z_Q,Y_Q|Q)+\mu(\epsilon)
\end{align}
where $D(\cdot\|\cdot)$ denotes the relative entropy;
and 
\begin{align}
 R&\le 
 \frac{1}{n}I(X^n;I,Y^n)+\mu(\epsilon)
 \\
 &=\frac{1}{n}\sum_{i=1}^nI(X^n;Y^n)+\frac{1}{n}\sum_{i=1}^nH(I|Y^n)-\frac{1}{n}\sum_{i=1}^nH(I|X^n)+\mu(\epsilon)
 \\
 &=\frac{1}{n}\sum_{i=1}^nI(X_i;Y_i)-\frac{1}{n}D(P_{Y^n}\|\otimes_{i=1}^nP_{Y_i})+\frac{1}{n}\sum_{i=1}^nH(I|Y^n)-\frac{1}{n}\sum_{i=1}^nH(I|X^n)+\mu(\epsilon)
 \\
 &=\frac{1}{n}\sum_{i=1}^nI(X_i;Y_i)+\frac{1}{n}\sum_{i=1}^nH(I|Y^n)-\frac{1}{n}\sum_{i=1}^nH(I|X^n)+\mu(\epsilon).
 \\
 &=I(X_Q;Y_Q|Q)+\frac{1}{n}\sum_{i=1}^nH(I|Y^n)-\frac{1}{n}\sum_{i=1}^nH(I|X^n)+\mu(\epsilon).
\end{align}
The proof is completed since the Markov chain $Q-X_Q-Y_Q$ implies that $I(X_Q;Y_Q|Q)\le I(X_Q;Y_Q)$; similarly $I(X_Q;Z_Q,Y_Q|Q)\le I(X_Q;Z_Q,Y_Q)$.
\end{proof}

Note that this is an $n$-letter  upper bound on the capacity of the channel. However if we can establish bounds on $\frac{1}{n}H(I|Y^n)$ for any value of $\frac{1}{n}H(I|X^n):=h$ satisfying the conditions in the proposition, we can use the proposition to establish an explicit computable upper bound on the capacity of the symmetric primitive relay channel. This is the approach of \cite{WuOzgurXie,WuOzgur,WuBarnesOzgur}, which we also adopt in this paper.

\section{Preliminaries}\label{sec_pre}
In this section, we introduce some more notation and provide a brief overview of reverse hypercontractivity, which will be our main tool for proving upper bounds on the capacity of the relay channel defined in the previous section. 
\subsection {Notation}
Given a measurable space $\mathcal{Y}$,
%let $\mathcal{H}(\mathcal{Y})$ be the set of measurable functions on $\mathcal{Y}$.
let $\mathcal{H}_{[0,1]}(\mathcal{Y})$ (resp.\ $\mathcal{H}_+(\mathcal{Y})$) be the set of measurable functions on $\mathcal{Y}$ taking values in $[0,1]$ (resp.\ $[0,\infty)$). Given a probability measure $Q$ on $\mathcal{Y}$,  $f\in\mathcal{H}_+(\mathcal{Y})$, and $p\in (0,\infty)$, let
\begin{align}
Q(f)&:=\int f\rmd Q,
\\
\|f\|_{L^p(Q)}=\|f\|_p&:=[Q(f^p)]^{\frac{1}{p}}.
\end{align}
Then by a limiting argument we have
\begin{align}
\|f\|_{L^0(Q)}:=e^{Q(\ln f)}.
\end{align}
Given a channel (i.e.\ conditional probability) $W=P_{Y|X}$, we will often write $W_{x^n}:=\otimes_{i=1}^nP_{Y|X=x_i}$. Finally, the bases in all logarithms, exponentials and information-theoretic quantities in this paper are natural.

\subsection{Reverse hypercontractivity}
Let $T\colon\mathcal{H}_+(\mathcal{Y})\to\mathcal{H}_+(\mathcal{Y})$ be a nonnegativity-preserving map (i.e., nonnegative functions are mapped to nonnegative functions) and let $Q$ be a fixed reference measure. $T$ is said to be reverse hypercontractive (see for example \cite{borell1982positivity}) if for some $0\le p<q\le 1$,
\begin{align}
\|Tf\|_p\ge \|f\|_q,
\quad
\forall f\in\mathcal{H}_+(\mathcal{Y}).
\label{e4}
\end{align}
Note that if $T$ is a conditional expectation operator (i.e.,
there exists some conditional probability $P_{Y|X}$ such that \begin{align}
(Tf)(x)=P_{Y|X=x}(f)
\end{align}
for each $x\in \mathcal{Y}$), then \eqref{e4} holds if $0\le p=q\le 1$, by Jensen's inequality.
Reverse hypercontractivity characterizes how $T$ is able to increase the small values of the function (i.e.\ positivity improving \cite{borell1982positivity}).
Markov semigroups provide a rich source of operators satisfying reverse hypercontractivity (among other favorable properties).

\subsection{Ornstein-Uhlenbeck semigroups}
For any $x^n\in\mathbb{R}^n$ and $t\ge 0$, define a linear operator $T_{x^n,t}$ by 
\begin{align}
T_{x^n,t}f(y^n)
:=\mathbb{E}[f(e^{-t}y^n+(1-e^{-t})x^n
+\sqrt{1-e^{-2t}}V^n)]
\label{e_tx}
\end{align}
for any $f\in \mathcal{H}_+(\mathbb{R}^n)$, where
$V^n\sim\mathcal{N}(0^n,\mb{I}_n)$.
Then $(T_{x^n,t})_{t\ge 0}$ is called the Ornstein-Uhlenbeck (OU) semigroup with stationary measure
$P_{Y^n|X^n=x^n}:=\mathcal{N}(x^n,\mb{I}_n)$ (see for example \cite{borell1982positivity}).
The OU semigroup is among the first examples where a reverse hypercontractivity estimate has been worked out:
\begin{lem}[\cite{borell1982positivity}]\label{lem_ou}
For any $q<p<1$ and $t\ge \half\ln\frac{1-q}{1-p}$,
we have
\begin{align}
\|T_{x^n,t} f\|_q\ge \|f\|_p,\quad\forall f\in \mathcal{H}_+(\mathbb{R}^n)
\end{align}
where the norms are with respect to the stationary measure $\mathcal{N}(x^n,\mb{I}_n)$.
\end{lem}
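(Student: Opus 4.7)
The plan is to follow Borell's original differential interpolation. I begin with two reductions. First I would reduce to $x^n=0$: translation by $x^n$ is an isometry from $L^p(\mathcal{N}(x^n,\mb{I}_n))$ onto $L^p(\mathcal{N}(0^n,\mb{I}_n))$ that intertwines $T_{x^n,t}$ with the zero-centered OU semigroup $T_t$, so both sides of the desired inequality transform identically. Second I would reduce to $n=1$ by tensorization: the OU semigroup on $\mathbb{R}^n$ factors as a tensor product of one-dimensional OU semigroups with respect to the product Gaussian, and for $p,q\le 1$ a reverse Minkowski inequality (H\"older with negative exponents applied fiberwise) lifts the one-dimensional estimate to the $n$-dimensional one.

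For the one-dimensional core I would interpolate the exponent. Define $r(t):=1-(1-p)e^{2t}$, so that $r(0)=p$, $r(t_*)=q$ for $t_*:=\half\ln\frac{1-q}{1-p}$, and $r(t)<1$ on $[0,t_*]$. Let $g(t,y):=T_tf(y)$ and $F(t):=\log\|g(t,\cdot)\|_{r(t)}$, the norm taken with respect to $\gamma:=\mathcal{N}(0,1)$. Applying Jensen's inequality to the stochastic representation \eqref{e_tx} shows that, at fixed exponent $q\le 1$, the map $t\mapsto\|T_tf\|_q$ is nondecreasing, so the case $t>t_*$ follows once $F(t_*)\ge F(0)$ is established. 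It thus suffices to prove $F'(t)\ge 0$ on $[0,t_*]$.

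Setting $A(t):=\gamma(g^{r(t)})$ and $h:=g^{r(t)/2}$, using the OU generator $L=\partial_y^2-y\partial_y$ (so that $\partial_tg=Lg$) together with the integration-by-parts identity $\int u\,Lv\,\rmd\gamma=-\int u'v'\,\rmd\gamma$, a direct computation gives
\begin{align*}
r(t)^2 A(t)\,F'(t)\;=\;r'(t)\,\mathrm{Ent}_\gamma(h^2)\,-\,4\bigl(r(t)-1\bigr)\gamma\bigl((h')^2\bigr),
\end{align*}
where $\mathrm{Ent}_\gamma(h^2):=\gamma(h^2\log h^2)-\gamma(h^2)\log\gamma(h^2)\ge 0$. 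Since $r(t)-1<0$, the Gaussian log-Sobolev inequality $\mathrm{Ent}_\gamma(h^2)\le 2\gamma((h')^2)$ upgrades the second term to $-4(r(t)-1)\gamma((h')^2)\ge -2(r(t)-1)\mathrm{Ent}_\gamma(h^2)$, reducing the desired $F'(t)\ge 0$ to the pointwise ODE inequality $r'(t)\ge 2(r(t)-1)$. Our $r(t)$ was defined to saturate this ODE with equality, which is precisely how the threshold $t_*$ is identified---as the match between the log-Sobolev constant $2$ and the linear ODE $r'=2(r-1)$.

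The main technical obstacle is justifying differentiation under the integral sign and the integration by parts for an arbitrary $f\in\mathcal{H}_+(\mathbb{R})$. The standard remedy is to first establish the inequality for smooth, bounded, strictly positive $f$ (replace $f$ by $f\vee\epsilon$ and truncate its tails), and then apply monotone convergence to recover the general case. The tensorization step similarly needs a small amount of care, since Minkowski reverses for exponents below $1$, but it reduces to a direct reverse-H\"older argument applied coordinate by coordinate once the one-dimensional bound is in hand.
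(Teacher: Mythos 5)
The paper does not prove this lemma; it is stated as a citation to Borell (1982), so there is no in-paper proof to compare against. Your proposal is a correct reconstruction of the standard argument (which is, in spirit, Borell's): interpolate the exponent along $r(t)=1-(1-p)e^{2t}$, differentiate $F(t)=\log\|T_tf\|_{r(t)}$, and close using the Gaussian log-Sobolev inequality. I verified your derivative identity $r(t)^2A(t)F'(t)=r'(t)\,\mathrm{Ent}_\gamma(h^2)-4\bigl(r(t)-1\bigr)\gamma\bigl((h')^2\bigr)$ and the reduction to the ODE condition $r'\ge 2(r-1)$: since $r<1$ on $[0,t_*]$ the coefficient $-4(r-1)$ is positive, so the LSI $\mathrm{Ent}_\gamma(h^2)\le 2\gamma((h')^2)$ applies with the right sign, and with $\mathrm{Ent}\ge 0$ the equality case $r'=2(r-1)$ gives exactly $F'\ge 0$; this also pins down $t_*=\tfrac12\ln\tfrac{1-q}{1-p}$ as the sharp threshold. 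The monotonicity of $t\mapsto\|T_tf\|_q$ for $q\le 1$ (by Jensen plus stationarity) correctly handles $t>t_*$.

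Two remarks for a full write-up. First, the reduction to $n=1$ via reverse Minkowski is valid but unnecessary and is the one step you flag as delicate: both $\mathrm{Ent}_\gamma$ and the Dirichlet form $\gamma(|\nabla h|^2)$ are additive over independent coordinates, so the Gaussian LSI holds on $\mathbb{R}^n$ with the same constant $2$, and the identical interpolation runs directly in $n$ dimensions with $(h')^2$ replaced by $|\nabla h|^2$. This sidesteps the reverse Minkowski/H\"older inequality for exponents below (and through) zero entirely. Second, when $q<0<p$ the interpolating exponent $r(t)$ crosses $0$ at an interior time, where $\|g\|_{r(t)}$ degenerates to the geometric mean $\exp\gamma(\log g)$; the apparent $1/r$ singularity in $F$ and $F'$ is removable, but this should be stated explicitly when filling in the regularity argument alongside the truncation/approximation you already mention.
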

In particular, taking $q=0$, we see that for any $f\in\mathcal{H}_{[0,1]}(\mathbb{R}^n)$,
\begin{align}
\mathbb{E}[\ln T_{x^n,t}f]
&\ge \|f\|_{1-e^{-2t}}
\\
&\ge
 \frac{1}{1-e^{-2t}}\ln\mathbb{E}[f]
 \label{e8}
 \\
&\ge
\left(1+\frac{1}{2t}\right)\ln\mathbb{E}[f]
,
\quad \forall f\in \mathcal{H}_{[0,1]}(\mathbb{R}^n)
\label{e_q0}
\end{align}
where \eqref{e8} used the fact that $f^p\ge f$,
\eqref{e_q0} follows from $\frac{1}{1-e^{-2t}}\le \frac{1}{2t}+1$ (note that $\ln\mathbb{E}[f]$ is negative!),
and the expectations are with respect to the stationary measure.

We remark that Lemma~\ref{lem_ou} is completely dual to the (forward) hypercontractivity estimate for the OU semigroup (see e.g.\ \cite{gross1975logarithmic}) in the sense that the dependence of the parameters $t\ge \half\ln\frac{1-q}{1-p}$ takes on the same formula (although in the case of hypercontractivity, both $p$ and $q$ are greater than 1).
However, this is merely a coincidence for the OU semigroup.
The reverse hypercontractivity is generally weaker (and hence more common) than hypercontractivity \cite{mossel2013reverse}, as the next example illustrates.

\subsection{Simple and semi-simple semigroups}
The simplest and the most natural semigroup that can be defined for any given stationary measure $P$ on a measurable space $\mathcal{Y}$ is the \emph{simple semigroup} (see e.g.\ \cite{mossel2013reverse}),
defined by
\begin{align}
T_tf= e^{-t}f+(1-e^{-t})P(f),\quad\forall t\ge 0, f\in\mathcal{H}_+(\mathcal{Y}).
\end{align}
In the i.i.d.\ case, the tensor product $T_t^{\otimes n}$ of any Markov semigroup operator $T_t$ forms a new Markov semigroup whose stationary measure is $P^{\otimes n}$.
The product inherits the same reverse hypercontractive inequality as its factors, 
which is called \emph{tensorization} (see e.g.\ \cite{mossel2013reverse}).
We call the tensor product of a simple semigroup a \emph{semi-simple} semigroup. We will use the semi-simple semigroup in the case of  discrete memoryless channels and continious channels with bounded density.

%We will use the semi-simple semigroup in the case of  discrete memoryless channels.  More generally, the argument works for any stationary memoryless channel $P_{Y|X}$ where $\sup_{x,x'}\left\|\frac{{\rm d}P_{Y|X=x}}{{\rm d}P_{Y|X=x'}}\right\|_{\infty}<\infty$ (or more generally, whenever the $\infty$-mutual information \cite{verdu2015alpha} of the stationary memoryless channel is finite; see Remark~\ref{rem_iinfty}). This excludes the additive Gaussian channels; we use the OU semigroup for the analysis of Gaussian channels.

By establishing the equivalence between the modified log-Sobolev inequality and the reverse hypercontractivity,
the recent paper by Mossel et al.\ \cite{mossel2013reverse} established the following striking universal reverse hypercontractivity estimate which does not depend on the stationary measure $P$.
We remark that the (forward) hypercontractivity is drastically different in that the 
bound depends on the smallest probability mass in $P$ (see e.g.\ \cite{boucheron2004concentration}).
\begin{lem}[\cite{mossel2013reverse}]\label{lem_mos}
Let $P_i$ be a probability distribution on $\mathcal{Y}$ for each $i=1,\dots,n$,
and let
\begin{align}
T_t:=\otimes_{i=1}^n[e^{-t}+(1-e^{-t})P_i]
\end{align}
be a Markov semigroup with stationary measure $\otimes_{i=1}^nP$.
For any $q<p<1$ and $t\ge \ln\frac{1-q}{1-p}$,
we have
\begin{align}
\|T_t f\|_q\ge \|f\|_p,\quad\forall f\in \mathcal{H}_+(\mathcal{Y}^n)
\end{align}
where the norms are with respect to the stationary measure.
\end{lem}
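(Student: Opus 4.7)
The proof proceeds in three steps, invoking the Mossel--Oleszkiewicz--Sen equivalence between a modified log-Sobolev inequality (MLSI) and reverse hypercontractivity. First, I would verify the MLSI for a single simple semigroup ($n = 1$). With generator $Lf = P(f) - f$ and Dirichlet form $\mathcal{E}(f, \ln f) = P(f \ln f) - P(f) \cdot P(\ln f)$, the MLSI
$$\mathrm{Ent}_P(f) := P(f \ln f) - P(f) \ln P(f) \leq \mathcal{E}(f, \ln f)$$
reduces, after cancelling $P(f \ln f)$ from both sides, to $P(f) \ln P(f) \geq P(f) \cdot P(\ln f)$, which is Jensen's inequality $\ln P(f) \geq P(\ln f)$ multiplied by $P(f) \geq 0$.

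Second, I would tensorize. By subadditivity of entropy $\mathrm{Ent}_{\otimes_i P_i}(f) \leq \sum_i \mathbb{E}[\mathrm{Ent}_{P_i}(f)]$ together with the additivity of the Dirichlet form of the product semigroup across coordinates, the MLSI transfers to the semi-simple semigroup with the same constant.

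Third, I would convert the MLSI into reverse hypercontractivity via the semigroup trajectory method. Setting $g = T_t f$ and choosing $r(t) = 1 - e^t(1-p)$ so that $r(0) = p$ and $r(t^*) = q$ at the critical time $t^* = \ln\frac{1-q}{1-p}$, one defines $\psi(t) := \ln \|g\|_{r(t)}$. Direct differentiation yields
$$\psi'(t) = \frac{1}{P(g^{r(t)})} \left[ \frac{r'(t)}{r(t)^2} \, \mathrm{Ent}_P(g^{r(t)}) + P(g^{r(t)-1} \, Lg) \right].$$
The ODE choice $r'(t) = -(1 - r(t))$ is dictated by the requirement that the MLSI bound on the entropy term balance exactly against the Dirichlet-form contribution $P(g^{r-1} Lg)$ on the right; the resulting inequality $\psi'(t) \geq 0$ then yields $\|T_{t^*}f\|_q \geq \|f\|_p$, and for $t > t^*$ the monotonicity of $L^r$-norms in $r$ on probability spaces gives the full statement.

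The main obstacle is executing Step 3 rigorously: the cancellation between the entropy term (produced by applying the MLSI to $g^{r(t)}$) and the Dirichlet-form contribution requires a careful identity tailored to the specific form of the generator $L$, and is the technical heart of the Mossel et al.\ equivalence. A secondary subtlety is that $r(t)$ eventually becomes negative, so the $L^r$-norm must be interpreted via $\|f\|_r = P(f^r)^{1/r}$ for $r < 0$ (valid for strictly positive $f$), with general $f \in \mathcal{H}_+(\mathcal{Y}^n)$ handled by a truncation argument $f \to f + \epsilon$, $\epsilon \downarrow 0$.
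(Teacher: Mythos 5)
The paper does not supply a proof of this lemma---it is cited as a black box from Mossel, Oleszkiewicz, and Sen, and the surrounding text merely notes that their proof proceeds ``by establishing the equivalence between the modified log-Sobolev inequality and the reverse hypercontractivity.'' Your reconstruction follows precisely that route, so there is no methodological divergence from what the paper intends.

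Your Step 1 is correct and clean: with $Lf = P(f) - f$ self-adjoint on $L^2(P)$, one has $\mathcal{E}(f,\ln f) = P(f\ln f) - P(f)P(\ln f)$, and subtracting $\mathrm{Ent}_P(f) = P(f\ln f) - P(f)\ln P(f)$ reduces the MLSI with constant $1$ exactly to Jensen's inequality $\ln P(f)\ge P(\ln f)$. Step 2 (tensorization of entropy and additivity of the product Dirichlet form) is standard. Your parameterization $r(t) = 1 - e^t(1-p)$, the critical time $t^* = \ln\frac{1-q}{1-p}$, and the derivative formula for $\psi$ are all right; the monotonicity $\psi'\ge 0$ reduces to $r^2\,\mathbb{E}[g^{r-1}Lg] \ge (1-r)\,\mathrm{Ent}_P(g^r)$, which via the MLSI further reduces to the comparison
\begin{align}
r^2\,\mathbb{E}[g^{r-1}Lg] + r(1-r)\,\mathbb{E}[g^{r}\,L\ln g] \;\ge\; 0 \qquad (0\ne r < 1).
\end{align}
You are right that this comparison is the technical heart and is where your sketch stops short of a proof. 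However, I would push back on the phrase ``tailored to the specific form of the generator'': the whole point of the Mossel et al.\ result is that this inequality holds \emph{universally} for every reversible Markov generator. Using the integral representation $\mathbb{E}[fLg] = -\tfrac12\int (f(y)-f(x))(g(y)-g(x))\,d\nu(x,y)$, the comparison reduces pointwise to an elementary two-variable scalar inequality in $u = a/b > 0$, with no structural input from the simple semigroup beyond reversibility. That universality is precisely what yields a reverse-hypercontractivity constant independent of the stationary measure (and hence of the minimum mass), in stark contrast to the forward case. Noting this would close the conceptual loop in your proposal; the truncation $f\mapsto f+\epsilon$ and the $r$-sign bookkeeping you mention at the end are the right remaining technicalities.
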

In particular, taking $q=0$ and using the same arguments before, we obtain
\begin{align}
\mathbb{E}[\ln T_tf]
&\ge
 \frac{1}{1-e^{-t}}\ln\mathbb{E}[f]
 \\
&\ge
\left(1+\frac{1}{t}\right)\ln\mathbb{E}[f]
,
\quad \forall f\in \mathcal{H}_{[0,1]}(\mathcal{Y}^n)
\end{align}
where the expectations are with respect to the stationary measure.

\section{Main Results}\label{sec_main}
We first state our results for the Gaussian case and then for channels with bounded density.

%\liu{Let's agree on a convention about the capitalizations of the titles; how about \emph{Capitalize All Words} in section titles and \emph{Only the first word} in subsection titiles.}\ayfer{Agreed.}
\subsection{Gaussian channels}\label{sec_gfano}
\begin{lem}\label{lem2}.
Let $I-Z^n-X^n-Y^n$ and $P_{Y^n|X^n=x^n}=P_{Z^n|X^n=x^n}=\mathcal{N}(x^n,\mb{I}_n)$.
Define $h:=\frac{1}{n}H(I|X^n)$. Then
\begin{align}
H(I|Y^n)
&\le
n\min_{t>0}\left\{t+\frac{1}{1-e^{-2t}}h
\right\}
\\
&=\frac{n}{2}\ln\left(
1+h+\sqrt{h^2+2h}
\right)
+\frac{n}{2}\left(h+\sqrt{h^2+2h}\right)
\label{e_17}
\\
&\le
n(h+\sqrt{2h}).
\label{e5}
\end{align}
\end{lem}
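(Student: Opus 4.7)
Plan.  The plan is to apply reverse hypercontractivity for the Ornstein--Uhlenbeck semigroup to the indicator functions of the relay bins.  For each $i\in\{1,\dots,\lceil e^{nC_0}\rceil\}$, extend the relay's encoder $f_n$ to a measurable function on all of $\mathbb{R}^n$ and define $f_i(y^n):=\mathbf{1}\{f_n(y^n)=i\}\in\mathcal{H}_{[0,1]}(\mathbb{R}^n)$.  The symmetry $P_{Y^n|X^n=x^n}=P_{Z^n|X^n=x^n}=\mathcal{N}(x^n,\mb{I}_n)$ is the crux of this step: even though the bins $\{f_n^{-1}(i)\}_i$ were designed for the relay's observation $Z^n$, not for $Y^n$, one gets for free that
\begin{equation*}
\mathbb{E}_{Y^n|X^n=x^n}[f_i(Y^n)]=P(f_n(Z^n)=i\mid X^n=x^n)=:g_i(x^n).
\end{equation*}
Applying Lemma~\ref{lem_ou} in the form \eqref{e8} (i.e.\ $q=0$, $p=1-e^{-2t}$) to the OU semigroup with stationary measure $\mathcal{N}(x^n,\mb{I}_n)$ then yields, for every $x^n$ and every $i$,
\begin{equation*}
\mathbb{E}_{Y^n|X^n=x^n}\bigl[-\ln T_{x^n,t}f_i(Y^n)\bigr]\le\frac{-\ln g_i(x^n)}{1-e^{-2t}}.
\end{equation*}

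The core of the proof is then to convert this ``smoothed'' bound into a bound on the true posterior $p(i|Y^n):=P(I=i|Y^n)$, at the cost of an additive $nt$.  I would realize $T_{x^n,t}f_i(y^n)$ as a genuine conditional probability $P(\tilde I=i\mid X^n=x^n,Y^n=y^n)$, where $\tilde I:=f_n(\tilde Y^n)$ and $\tilde Y^n:=e^{-t}Y^n+(1-e^{-t})X^n+\sqrt{1-e^{-2t}}\,V^n$ is the OU-coupling of $Y^n$ at time $t$, with $V^n\sim\mathcal{N}(0,\mb{I}_n)$ independent of $(X^n,Y^n,Z^n)$.  By construction $\tilde I\mid X^n$ has the same law as $I\mid X^n$, and $\tilde I\perp I\mid X^n$.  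I would then split
\begin{equation*}
H(I|Y^n)\le H(\tilde I|Y^n)+H(I|\tilde I,Y^n),
\end{equation*}
bounding $H(\tilde I|Y^n)$ by $nt$ via a density-ratio comparison between the $n$-fold product OU transition kernel on $\mathbb{R}^n$ (covariance $(1-e^{-2t})\mb{I}_n$) and the marginal $P_{Y^n|X^n}$, and bounding $H(I|\tilde I,Y^n)$ by $nh/(1-e^{-2t})$ by averaging the displayed reverse-hypercontractive estimate over $(X^n,I)\sim P_{X^n,I}$ and using the conditional-independence $I-X^n-(Y^n,\tilde Y^n)$.

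The step I expect to be hardest is producing precisely an $nt$ penalty rather than a generic Gaussian-entropy expression of the form $\tfrac{n}{2}\ln\frac{1}{1-e^{-2t}}$; the clean $nt$ form should come out of the tensorization of reverse hypercontractivity across the $n$ coordinates combined with the fact that the relay bins partition $\mathbb{R}^n$ into only $\lceil e^{nC_0}\rceil$ cells, rather than from a naive differential-entropy accounting.  Once the bound $H(I|Y^n)\le n(t+h/(1-e^{-2t}))$ is in hand, the optimization over $t>0$ is routine: setting $(d/dt)[t+h/(1-e^{-2t})]=0$ reduces to the quadratic $(1-e^{-2t})^2=2h\,e^{-2t}$ in the variable $e^{-2t}$, whose relevant root $e^{-2t}=1+h-\sqrt{h^2+2h}$ plugged back in produces the closed-form expression \eqref{e_17}.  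The simpler estimate \eqref{e5} then follows either from the single choice $t=\sqrt{h/2}$, or equivalently by invoking \eqref{e_q0} together with the inequality $(1-e^{-2t})^{-1}\le 1+1/(2t)$ already recorded in the preliminaries.
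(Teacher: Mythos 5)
Your setup is fine: the identification $P(I=i\mid X^n=x^n)=W_{x^n}(f_i)$, the application of Lemma~\ref{lem_ou} in the form \eqref{e_q0}, and the closing optimization over $t$ (the quadratic $(1-e^{-2t})^2=2he^{-2t}$, the root $e^{-2t}=1+h-\sqrt{h^2+2h}$, and the relaxation to \eqref{e5} via $(1-e^{-2t})^{-1}\le 1+1/(2t)$) all match the paper and are correct. The problem is the middle of the argument, where you propose the decomposition
\[
H(I\mid Y^n)\le H(\tilde I\mid Y^n)+H(I\mid \tilde I,Y^n),
\]
with $\tilde I:=f_n(\tilde Y^n)$ and $\tilde Y^n$ the forward OU flow of $Y^n$. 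Neither claimed bound on the two pieces is established, and I do not believe either follows by the mechanism you sketch.

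For the second piece, the reverse-hypercontractive estimate, averaged over $(X^n,I,Y^n)$, gives an upper bound on $\mathbb{E}\bigl[-\ln P(\tilde I=I\mid X^n,Y^n)\bigr]$, which is a cross-entropy-type quantity involving the $X^n$-dependent kernel $T_{X^n,t}f_i$. This is not $H(I\mid\tilde I,Y^n)$: since $T_{x^n,t}f_i$ depends on $x^n$, the functions $y^n\mapsto T_{x^n,t}f_i(y^n)$ do \emph{not} define a conditional law of $I$ given $(\tilde I,Y^n)$, and the decoder in $H(I\mid\tilde I,Y^n)$ does not see $X^n$. Indeed one has $H(I\mid\tilde I,Y^n)=nh+I(I;X^n\mid\tilde I,Y^n)$, and bounding the extra term by $\tfrac{e^{-2t}}{1-e^{-2t}}nh$ needs a separate argument that the reverse-hypercontractive inequality does not provide. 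For the first piece, $H(\tilde I\mid Y^n)\le nt$ does not follow from a ``density-ratio comparison'' as you describe: the OU transition kernel has covariance $(1-e^{-2t})\mb{I}_n$ while the stationary law has covariance $\mb{I}_n$, so the ratio of densities is unbounded and no $e^{nt}$-type control is available. You flag this as the hardest step yourself; as written there is no proof there.

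What your sketch is missing is the key structural trick the paper uses. Because $T_{x^n,t}$ depends on $x^n$, the paper does \emph{not} work with $T_{x^n,t}f_i$ directly as a predictor of $I$. Instead it observes the scaling identity
\[
W_{x^n}\bigl(\ln T_{x^n,t}f_i\bigr)=\bar W_{x^n}\bigl(\ln T_tf_i\bigr),\qquad
\bar W_{x^n}:=\mathcal{N}(x^n,e^{-2t}\mb{I}_n),
\]
where $T_t$ is the plain convolution $T_tf(y^n)=\mathbb{E}[f(y^n+\sqrt{1-e^{-2t}}V^n)]$ that does \emph{not} depend on $x^n$. Because $\sum_iT_tf_i=1$, the family $(T_tf_i)_i$ is a bona fide conditional distribution $Q_{I\mid\bar Y^n}$ for a lower-noise auxiliary output $\bar Y^n$ (with $P_{\bar Y^n\mid X^n}=\bar W$); nonnegativity of relative entropy then yields $H(I\mid\bar Y^n)\le\tfrac{1}{1-e^{-2t}}H(I\mid X^n)$, and $H(I\mid Y^n)-H(I\mid\bar Y^n)\le nt$ follows from the coupling $Y^n=\bar Y^n+\sqrt{1-e^{-2t}}G^n$ and the Gaussian mutual-information computation $I(\sqrt{1-e^{-2t}}G^n;\sqrt{1-e^{-2t}}G^n+e^{-t}\bar G^n)=nt$. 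Note also that the auxiliary variable is \emph{lower}-noise than $Y^n$, the opposite direction of your $\tilde Y^n$. I would fold the dilation identity and the lower-noise auxiliary $\bar Y^n$ into your argument before trying to complete the proof.
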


The relaxed bound \eqref{e5} is the same as \cite[Lemma 4.1]{WuOzgur}.\footnote{Note that the entropy and rates in \cite{WuOzgur} are defined in bits, while we use nats in the current paper.} Thus \eqref{e_17} is a strict improvement of \cite[Lemma~4.1]{WuOzgur}. 
When combined with Proposition~\ref{main:prop}, this results immediately yield the following upper bound on the capacity of the Gaussian symmetric primitive relay channel.
We say $P>0$ is an average power constraint if the codewords satisfy
\begin{align}
 \frac{1}{2^{nR}}\sum_{m=1}^{2^{nR}}\|x^n(m)\|_2^2\le nP.
\end{align}

%\liu{Remove the duplicate Corollaries}\ayfer{The final structure of the next corollary is a little different than this one, may be worth keeping this one to illustrate what kind of bounds we get when we combine Proposition 2 with the lemmas in this section. But I also agree combining the lemmas and the previous proposition is straightforward and stating the corollaries in each case  creates repetition. If we do not delete this one, we can put a note after Lemma 9 saying ``Combining the lemma with Proposition 2gives an upper bound on the capacity of the primitive relay channel of flavor similar to Corollary 6.'' I leave it totally up to you to decide about which corollaries to include or cut out.}
%\cout{ 
\begin{cor}\label{cor1}
The capacity of the symmetric Gaussian primitive relay channel with $P_{Y|X=x}=P_{Z|X=x}=\mathcal{N}(0,N)$, average power constraint $P$ 
and relay rate $C_0\ge 0$ satisfies
%\cout{
%\begin{align}
% C(C_0)  & \leq \frac{1}{2}\ln\left(1+\frac{2P}{N}\right) \\
%C(C_0)   & \leq   \frac{1}{2}\ln\left(1+\frac{P}{N}\right)+C_0  -h   \\
%C(C_0)   & \leq   \frac{1}{2}\ln\left(1+\frac{P}{N}\right)+ \frac{1}{2}\ln\left(
%1+h+\sqrt{h^2+2h}
%\right)
%+\frac{1}{2}\left(h+\sqrt{h^2+2h}\right) -h 
%\end{align}
%}
%\liu{Let me know if you agree with the change to the following (completely equivalent bound):}
\begin{align}
 C(C_0)  & \leq 
\left\{ \frac{1}{2}\ln\left(1+\frac{2P}{N}\right),\,
 \frac{1}{2}\ln\left(1+\frac{P}{N}\right)+C_0-c^{-1}(C_0)\right\}
 \end{align}
where $c^{-1}(\cdot)$ denotes the inverse of the surjective function $c\colon [0,\infty)\to[0,\infty),\,h\mapsto \frac{1}{2}\ln\left(
1+h+\sqrt{h^2+2h}
\right)
+\frac{1}{2}\left(h+\sqrt{h^2+2h}\right)$. 
\end{cor}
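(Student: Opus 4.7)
The plan is to combine the two $n$-letter bounds of Proposition~\ref{main:prop} with Lemma~\ref{lem2} and then send $\epsilon\to 0$. Bound \eqref{eq1} will produce the first entry of the minimum, while \eqref{eq2} combined with Lemma~\ref{lem2} will produce the second.

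For the Gaussian mutual informations, note that given $X_i = x$ the pair $(Y_i, Z_i)$ is $\mathcal{N}(x,N)^{\otimes 2}$, so $(Y_i+Z_i)/2 = X_i + W_i'$ with $W_i'\sim\mathcal{N}(0,N/2)$ is a sufficient statistic for $X_i$. This yields $I(X_i;Y_i,Z_i) \leq \frac{1}{2}\ln(1+2\mathbb{E}[X_i^2]/N)$. Averaging over $i$, using concavity of the logarithm, and invoking the average power constraint $\frac{1}{n}\sum_i \mathbb{E}[X_i^2] \leq P$ gives $I(X_Q;Y_Q,Z_Q) \leq \frac{1}{2}\ln(1+2P/N)$. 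The same argument with $Y_i$ alone gives $I(X_Q;Y_Q) \leq \frac{1}{2}\ln(1+P/N)$, which is the mutual-information piece of \eqref{eq2}.

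For the entropy-difference piece of \eqref{eq2}, set $h := \frac{1}{n}H(I|X^n)$. A bijective rescaling by $1/\sqrt{N}$ (which leaves $H(I|X^n)$ and $H(I|Y^n)$ unchanged, since $I$ is discrete) puts us into the unit-variance setting of Lemma~\ref{lem2}, and the lemma then gives $\frac{1}{n}H(I|Y^n) \leq c(h)$ with $c$ the function displayed in \eqref{e_17}. Since $I$ takes at most $\lceil e^{nC_0}\rceil$ values, we also have $h \leq C_0$ and $\frac{1}{n}H(I|Y^n) \leq C_0$. Hence $\frac{1}{n}H(I|Y^n) - h \leq \min\{c(h),C_0\} - h$ for some $h\in[0,C_0]$, so the task reduces to maximizing this right-hand side.

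This optimization is what I view as the main technical step. A short derivative calculation shows that $c'(h) > 1$ for $h>0$, so both $c$ and $h\mapsto c(h)-h$ are strictly increasing; it follows that $\min\{c(h),C_0\}-h$ attains its maximum at $h=c^{-1}(C_0)$, where it equals $C_0 - c^{-1}(C_0)$. Substituting this into \eqref{eq2} and sending $\epsilon\to 0$ yields the second entry of the minimum, completing the proof.
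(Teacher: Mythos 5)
Your proposal is correct and follows essentially the same route as the paper's proof: apply Proposition~\ref{main:prop}, bound the mutual information terms by the Gaussian capacities under the average power constraint, replace $\tfrac{1}{n}H(I|Y^n)$ by $\min\{C_0,c(h)\}$ via Lemma~\ref{lem2} and the cardinality bound on $I$, and observe that $c(h)-h$ is increasing so the maximum is at $h=c^{-1}(C_0)$. You supply a bit more explicit justification than the paper (the sufficient-statistic bound on the mutual informations, the rescaling by $1/\sqrt{N}$ to match the unit-variance setting of Lemma~\ref{lem2}, and the claim $c'>1$), but the argument is the same one.
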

%}
\begin{proof} Let $X^n$ be an equiprobably selected codeword,
and $Q$ be the time sharing random variable as in Proposition~\ref{main:prop}, 
and note that
\begin{align}\label{eq:powerconst}
E[X^2_Q]=  \frac{1}{n}\sum_{i=1}^{n} E[X^2_i]  =  \frac{1}{n} E\left[\sum_{i=1}^{n}X^2_i\right]  \leq P.
\end{align}
Then Proposition~\ref{main:prop} gives
\begin{align}
 R  & \leq I(X_Q;Y_Q,Z_Q)+\mu(\epsilon)   \\
R   & \leq   I(X_Q;Y_Q)+\frac{1}{n}H(I|Y^n) -\frac{1}{n}H(I|X^n) +\mu(\epsilon)
\\
&\le   I(X_Q;Y_Q)+ \min\{C_0,c(h)\}-h+\mu(\epsilon)
\label{e40}
\end{align}
where we defined $h=\frac{1}{n}H(I|X^n)$, and \eqref{e40} used the fact that $\frac{1}{n}H(I|Y^n)\leq C_0$ and applied  Lemma~\ref{lem2}.
To finish, note that the mutual information terms are maximized by choosing $X_Q\sim\mathcal{N}(0,P)$,
and $\min\{C_0,c(h)\}-h$ is maximized at $h=c^{-1}(C_0)$ which can be seen from the fact that $c(h)-h$ is an increasing function.
\end{proof}
Using similar lines of argument as in the proof of Lemma~\ref{sec_gfano},
we will be able to obtain bounds for channels with bounded densities in Section~\ref{sec_bdd} (by using a different semigroup adapted to that class of channels).
For the same Gaussian setting as Lemma~\ref{sec_gfano},  however,
we can capitalize on certain scaling properties of the Gaussian channel and use slightly different lines of proof,
to show the following sharper bound:
\begin{lem}\label{lem3}
Let $I-Z^n-X^n-Y^n$ where $P_{Y^n|X^n=x^n}=P_{Z^n|X^n=x^n}=\mathcal{N}(x^n,\mb{I}_n)$, and define $h_2:=\frac{1}{n}H(I|Y^n)$ and $h_1:=\frac{1}{n}H(I|X^n)$. Then
\begin{align}
h_2-h_1\le\frac{1}{2}\ln(1+2h_2).
\end{align}
\end{lem}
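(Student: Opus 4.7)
The plan is to follow the same initial reverse-hypercontractivity step as in the proof of Lemma~\ref{lem2}, and then to capitalize on the scale-invariance of the Gaussian channel in order to re-express the resulting bound in terms of $h_2$ rather than $h_1$.

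First, since $I=f(Z^n)$ is a deterministic function of $Z^n$, each indicator $f_i(z^n):=\mathbf{1}\{f(z^n)=i\}$ is $\{0,1\}$-valued, so $f_i^p=f_i$ for every $p>0$ and $\|f_i\|_{L^p(W_{x^n})}=g_i(x^n)^{1/p}$ with $g_i(x^n):=P(I=i\mid X^n=x^n)=W_{x^n}(f_i)$. Applying Lemma~\ref{lem_ou} with $q=0$ and $p=1-e^{-2t}$, weighting by $g_i(X^n)$ and summing over $i$, and then averaging over $X^n$ yields
\begin{equation}\label{eq:rhc3}
-\mathbb{E}\!\left[\sum_i g_i(X^n)\ln T_{X^n,t}f_i(Y^n)\right]\le \frac{nh_1}{1-e^{-2t}}.
\end{equation}
This is the same first step used for Lemma~\ref{lem2}.

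Second, I would interpret the left-hand side of \eqref{eq:rhc3} probabilistically. Under the OU-coupling $P^{(t)}$ in which $(\tilde Z^n,Y^n)\mid X^n$ is jointly Gaussian with correlation $e^{-t}$ and preserves the marginals of $(Z^n,X^n)$ and $(Y^n,X^n)$, one has $T_{X^n,t}f_i(Y^n)=P^{(t)}(\tilde I=i\mid X^n,Y^n)$ where $\tilde I:=f(\tilde Z^n)$, and $g_i(X^n)=P^{(t)}(\tilde I=i\mid X^n)$. The Gibbs inequality (cross-entropy is at least entropy) together with the chain rule then gives
\[
-\mathbb{E}\!\left[\sum_i g_i(X^n)\ln T_{X^n,t}f_i(Y^n)\right]\ge H^{(t)}(\tilde I\mid X^n,Y^n) = H^{(t)}(\tilde I\mid Y^n)-I^{(t)}(\tilde I;X^n\mid Y^n),
\]
and as $t\to\infty$ the coupling decouples so that $H^{(t)}(\tilde I\mid Y^n)\to H(I\mid Y^n)=nh_2$.

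Third, and this is the genuinely Gaussian step: because $\tilde Z^n$ equals an $(X^n,Y^n)$-measurable part plus an independent Gaussian of variance $1-e^{-2t}$, a conditional entropy power inequality applied to $\tilde Z^n\mid(\tilde I,Y^n)$ versus $\tilde Z^n\mid Y^n$ should control the coupling gap $I^{(t)}(\tilde I;X^n\mid Y^n)$ by a term of the form $\tfrac n2\ln\bigl(1+\mathrm{scale}/(1-e^{-2t})\bigr)$, where the scale is proportional to $h_2$. Tuning $t$ so that $e^{-2t}=(1+2h_2)^{-1}$ balances this against \eqref{eq:rhc3} and, upon rearrangement, produces
\[
h_2-h_1\le\tfrac12\ln(1+2h_2).
\]

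The main obstacle will be the third step: one must carry out the conditional-EPI computation carefully enough to extract exactly the target $\tfrac12\ln(1+2h_2)$, with a choice of $t$ that is simultaneously compatible with the reverse hypercontractivity estimate and with the EPI. The Gaussian scale-invariance is essential here (it is the Gaussianity of the increment $\sqrt{1-e^{-2t}}V^n$ that makes EPI available), which is also why the sharper bound of Lemma~\ref{lem3} is specific to the Gaussian case and is not available for the bounded-density channels treated in Section~\ref{sec_bdd}.
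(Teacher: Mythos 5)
Your first step (the reverse hypercontractivity bound $-\mathbb{E}[\sum_i g_i(X^n)\ln T_{X^n,t}f_i(Y^n)]\le nh_1/(1-e^{-2t})$) is correct and matches the paper's starting point. But the way you continue diverges from the paper's proof and contains real gaps.

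In your step 2, the lower bound by $H^{(t)}(\tilde I\mid X^n,Y^n)$ is not literally Gibbs' inequality, because the weights $g_i(X^n)$ are $P^{(t)}(\tilde I=i\mid X^n)$, not $P^{(t)}(\tilde I=i\mid X^n,Y^n)$. The inequality does hold for a different reason (for fixed $x^n$, the gap equals $\sum_i\mathrm{Cov}_{Y^n\mid x^n}(p_i(Y^n),\ln p_i(Y^n))\ge 0$ with $p_i=T_{x^n,t}f_i$, which is nonnegative by comonotonicity). The more serious issue is the identification $H^{(t)}(\tilde I\mid Y^n)=nh_2$: this holds only at $t=\infty$, since for finite $t$ the coupling makes $(\tilde I,Y^n)$ have a different joint law than $(I,Y^n)$. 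But taking $t\to\infty$ trivializes your step 1 bound to $nh_1\ge nh_1$. So you need $t$ to be finite, where $H^{(t)}(\tilde I\mid Y^n)$ is \emph{not} $nh_2$ and there is no clear monotonicity to replace the equality. Finally, step 3 (controlling $I^{(t)}(\tilde I;X^n\mid Y^n)$ by a conditional EPI with a ``scale proportional to $h_2$'') is left unproved, and it is not clear the conditional law of $\tilde Z^n$ given $Y^n$ can be analyzed in the required way: the EPI gives information about $h(\tilde Z^n\mid\cdot)$, but $X^n$ is an arbitrary codeword distribution and its conditional entropy given $Y^n$ is not accessible.

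The paper's proof avoids all of this. Instead of coupling $\tilde Z^n$ with $Y^n$, it scales the noise \emph{up} by defining $\hat W_{x^n}=\mathcal{N}(x^n,e^{2t}\mb{I}_n)$, accounts for the change from $W$ to $\hat W$ via the explicit KL divergence $D(W_{X^n}\|\hat W_{X^n}\mid P_{X^n})=nt-\tfrac n2(1-e^{-2t})$ (combined with data processing applied to $P_{I|Z^n}$), applies the reverse hypercontractivity estimate to the noisier channel $\hat W$, and then uses the dilation identity $\hat W_{x^n}(\ln\hat T_{x^n,t}f)=W_{x^n}(\ln\hat T_tf)$ to rewrite everything in terms of the original $Y^n\sim W_{x^n}$. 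The closing step is a clean Gibbs inequality applied to the honest joint distribution $(I,Y^n)$ (because $\sum_i\hat T_tf_i\equiv 1$ defines a genuine conditional distribution), yielding $(1-e^{-2t})H(I\mid Y^n)$ directly with no coupling, no EPI, and no limiting argument. Optimizing over $t$ then gives $\tfrac 12\ln(1+2h_2)$. In short, the Gaussian scaling you correctly guessed is essential, but the paper applies it by dilating the stationary measure of the semigroup rather than by coupling the relay observation to the destination observation; the latter route leaves you with an interpolation tension ($t$ finite vs.\ $t\to\infty$) and an unproved EPI step.
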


Asymptotically,  \cite[Lemma 4.1]{WuOzgur} (which is \eqref{e5}) gives\footnote{We write $c(h)\lesssim h$, $h\ll 1$, if $\limsup_{h\downarrow0}\frac{c(h)}{h}\le 1$.}
\begin{align}
  h_2&\lesssim \sqrt{2h_1},\quad h_1\ll 1; \label{e44}
  \\
  h_2&\lesssim 2h_1,\quad h_1\gg 1.\label{e_45}
\end{align}
The bound in \eqref{e_17} has the same asymptotics \eqref{e44}; but with \eqref{e_45} replaced by $h_2\lesssim h_1$, $h_1\gg 1$.
In contrast, the bound in Lemma~\ref{lem3} gives
\begin{align}
  h_2&\lesssim \sqrt{h_1},\quad h_1\ll 1; 
  \\
  h_2&\lesssim h_1,\quad h_1\gg 1,
\end{align}
so Lemma~\ref{lem3} essentially yields improvements by constant factors.
A numerical comparison is shown in Figure~\ref{fig1}.
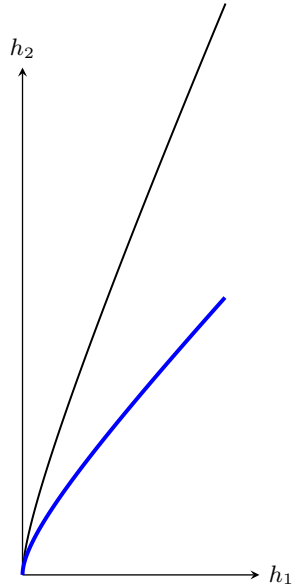
\begin{figure}[ht]
\centering
    \begin{tikzpicture}[scale=0.9,
      dot/.style={draw,fill=black,circle,minimum size=1mm,inner sep=0pt},arw/.style={->,>=stealth}]
      \draw[arw,line width=0.5pt] (0,0) to (0,7.5) node[above,font=\small] {$h_2$};
      \draw[arw,line width=0.5pt] (0,0) to (3.5,0) node[right,font=\small]{$h_1$};
     \draw[line width=0.75pt,domain=0.00001:3-0.00001,variable=\r,samples=100]
       plot({\r},
       {2*\r+sqrt(2*\r)});
     \draw[color=blue,line width=1.5pt,domain=0.00001:4.1-0.00001,variable=\r,samples=100]
       plot({\r-0.5*ln(1+2*\r)},
       {\r});
\end{tikzpicture}
\caption{The bound in \cite[Lemma~4.1]{WuOzgur} (see \eqref{e5}) is plotted in the thin line and the bound in Lemma~\ref{lem3} is plotted in the thick line, for $h_1\in(0,3)$.}
\label{fig1}
\end{figure}

%\ayfer{Instead of including the above figure, would not it be more interesting to provide a plot comparing the resultant bounds on the primitive relay channel. (30)-(33) pretty clearly illustrate the difference in terms of the preconstant, I do not think the visual adds much on top of that discussion.}\liu{We can keep both figures.}

Combined with Proposition~\ref{main:prop}, this results  yields the following upper bound on the capacity of the Gaussian symmetric primitive relay channel. The proof is immediate and follows similar arguments as in the proof of Corollary~\ref{cor1}.

\begin{cor}\label{cor1}
The capacity of the symmetric Gaussian primitive relay channel with $P_{Y|X=x}=P_{Z|X=x}=\mathcal{N}(x,N)$ with average power constraint $P$ satisfies
\begin{align}
 C(C_0)\le\min\left\{\frac{1}{2}\ln\left(1+\frac{2P}{N}\right),\,\frac{1}{2}\ln\left(1+\frac{P}{N}\right)+\frac{1}{2}\ln(1+2C_0)\right\}.
\end{align}
\end{cor}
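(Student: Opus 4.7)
The plan is to combine Proposition~\ref{main:prop} with Lemma~\ref{lem3} in the same spirit as the earlier Corollary~\ref{cor1}, but with the new (sharper) scaling bound doing the work of Lemma~\ref{lem2}. The first bound in the min comes from \eqref{eq1} and is essentially the MAC-style cutset bound: since $Y_Q$ and $Z_Q$ are conditionally i.i.d.\ Gaussians $\mathcal{N}(X_Q,N)$, the vector $(Y_Q,Z_Q)$ is equivalent to observing $X_Q$ through an effective AWGN with noise $N/2$, so under $E[X_Q^2]\le P$ (which follows from the average power constraint exactly as in \eqref{eq:powerconst}) we get $I(X_Q;Y_Q,Z_Q)\le\tfrac{1}{2}\ln(1+2P/N)$, with the Gaussian input being the maximizer.

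Next, I would handle the second term in the min by invoking \eqref{eq2}. Set $h_1:=\tfrac{1}{n}H(I|X^n)$ and $h_2:=\tfrac{1}{n}H(I|Y^n)$. The key observation is that, because $I=f(Z^n)$ takes at most $\lceil e^{nC_0}\rceil$ values, $H(I)\le nC_0+o(n)$ and in particular $h_2\le C_0+o(1)$. Lemma~\ref{lem3} supplies the crucial inequality
\begin{align}
h_2-h_1\le \tfrac{1}{2}\ln(1+2h_2).
\end{align}
Since the right-hand side is monotone increasing in $h_2$ and $h_2\le C_0+o(1)$, this gives $h_2-h_1\le \tfrac{1}{2}\ln(1+2C_0)+o(1)$. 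Combined with the Gaussian upper bound $I(X_Q;Y_Q)\le\tfrac{1}{2}\ln(1+P/N)$, relation \eqref{eq2} yields
\begin{align}
R\le \tfrac{1}{2}\ln(1+P/N)+\tfrac{1}{2}\ln(1+2C_0)+\mu(\epsilon)+o(1).
\end{align}

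Finally, taking the minimum of the two bounds, letting $n\to\infty$ so that $\epsilon=P_e^{(n)}\to 0$ (hence $\mu(\epsilon)\to 0$), and taking the supremum over achievable rates gives the stated inequality. There is no serious obstacle here: the entire argument is an assembly of Proposition~\ref{main:prop}, Lemma~\ref{lem3}, the standard Gaussian capacity computation, and the monotonicity of $x\mapsto\tfrac{1}{2}\ln(1+2x)$. The only place requiring a moment of care is the bound $h_2\le C_0+o(1)$ and the fact that we do \emph{not} need to optimize over $h_1$ (as was done in the previous corollary via $c^{-1}(C_0)$), because Lemma~\ref{lem3} already delivers an upper bound on the difference $h_2-h_1$ directly in terms of the cardinality of the range of $I$.
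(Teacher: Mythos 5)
Your proposal is correct and matches the paper's intended argument: the paper states the proof is immediate and follows the same lines as the earlier corollary, i.e.\ apply \eqref{eq1} for the first term and \eqref{eq2} together with Lemma~\ref{lem3}, the bound $\frac{1}{n}H(I|Y^n)\le C_0+o(1)$, and the Gaussian mutual-information maximization for the second. Your observation that no optimization over $h_1$ is needed (unlike the $c^{-1}(C_0)$ step in the previous corollary) is exactly the simplification Lemma~\ref{lem3} provides.
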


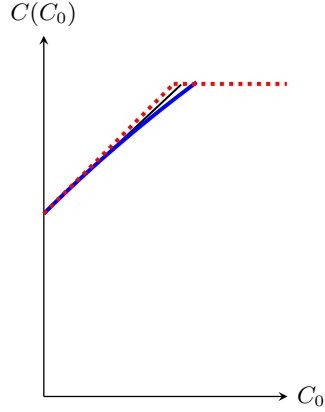
\begin{figure}[ht]
\centering
    \begin{tikzpicture}[scale=12,
      dot/.style={draw,fill=black,circle,minimum size=1mm,inner sep=0pt},arw/.style={->,>=stealth}]
      \draw[arw,line width=0.5pt] (0,0) to (0,0.4) node[above,font=\small] {$C(C_0)$};
      \draw[arw,line width=0.5pt] (0,0) to (0.27,0) node[right,font=\small]{$C_0$};
     \draw[line width=0.75pt,domain=0.00001:0.009-0.00001,variable=\r,samples=100]
       plot({2*\r+sqrt(2*\r)},
       {2*\r+sqrt(2*\r)-\r+0.5*ln(1.5)});       
     \draw[color=blue,line width=1.5pt,domain=0.00001:0.17-0.00001,variable=\r,samples=100]
       plot({\r},
       {0.5*ln(1+2*\r)+0.5*ln(1.5)});
          \draw[color=red,line width=1.5pt,dotted,domain=0.5*ln(4/3):0.27-0.00001,variable=\r,samples=100]
          plot({\r},{0.5*ln(2)});
        \draw[color=red,line width=1.5pt,dotted,domain=0.00001:0.5*ln(4/3)-0.00001,variable=\r,samples=100]
          plot({\r},{\r+0.5*ln(1.5)});
\end{tikzpicture}
\caption{The bound on capacity using in \cite[Lemma~4.1]{WuOzgur} (see \eqref{e5}) is plotted in the thin line.
The bound in Corollary~\ref{cor1} is plotted in the thick line.
The cutset bound is plotted in the dotted line.
The signal to noise ratio is chosen as $\frac{P}{N}=0.5$.
The range of the relay rate is $C_0\in(0,0.27)$.}
\label{fig2}
\end{figure}

\subsection{Channel distributions with bounded densities}\label{sec_bdd}
In this section, we state our results for discrete memoryless channels. More generally, our bounds apply to channels with bounded conditional density, or more precisely when one can find a reference measure such that the density of the output distribution of a stationary memoryless channel can be bounded by a constant independent of the input distribution. 
%(this is equivalent to the finiteness of the $\infty$-mutual information \cite{verdu2015alpha} of the stationary memoryless channel).

%The proof is analogous to the proof of Lemma~\ref{lem2}; the difference lies in the method of getting rid of the dependence on the input $x^n$ in the semigroup operator.
\begin{lem}\label{lem4}
Fix $W=P_{Y|X}$.
Suppose that $I-Z^n-X^n-Y^n$ where $P_{Y^n|X^n}=P_{Z^n|X^n}=W^{\otimes n}$,
and
\begin{align}
\alpha:=\sup_x\left\|\frac{\rmd P_{Y|X=x}}{\rmd Q_Y}\right\|_{\infty}<\infty
\label{e35}
\end{align}
for some probability measure $Q_Y$.
(In the discrete case, we can always take $\alpha=\sum_{y}\max_{x}W_x(y)$.)
Define $h:=\frac{1}{n}H(I|X^n)$.
Then
\begin{align}
\frac{1}{n}H(I|Y^n)
&\le c_{\alpha}(h):=
\min_{t>0}\left\{(\alpha-1)t+\frac{1}{1-e^{-t}}h
\right\}
\\
&=(\alpha-1)
\left[
\ln\left(1+\frac{h}{2(\alpha-1)}+\sqrt{\frac{h}{\alpha-1}
+\frac{h^2}{4(\alpha-1)^2}}
\right)
+\frac{h}{2(\alpha-1)}
+\sqrt{\frac{h}{\alpha-1}
+\frac{h^2}{4(\alpha-1)^2}}
\right].
\end{align}
\end{lem}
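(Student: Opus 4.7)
The plan is to follow the structure of the proof of Lemma~\ref{lem2}, but replace the Ornstein--Uhlenbeck semigroup with the semi-simple semigroup $T_{x^n,t} := \otimes_{j=1}^{n}\!\left[e^{-t}\,\mathrm{id} + (1-e^{-t})W_{x_j}\right]$, whose stationary measure is $W_{x^n}=P_{Y^n|X^n=x^n}$. First I would introduce, for each value $i$ of the relay message, the conditional probabilities $\phi_i(x^n) := \Pr(I=i\mid X^n=x^n)$ and $\psi_i(y^n) := \Pr(I=i\mid Y^n=y^n)$, both taking values in $[0,1]$, and rewrite $H(I|X^n) = -\sum_i \mathbb{E}[\phi_i(X^n)\ln\phi_i(X^n)]$ and $H(I|Y^n) = -\sum_i \mathbb{E}[\psi_i(Y^n)\ln\psi_i(Y^n)]$.

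I would then apply Lemma~\ref{lem_mos} with $q = 0$ and $p = 1-e^{-t}$ to $T_{x^n,t}$, obtaining, for every $f \in \mathcal{H}_{[0,1]}(\mathcal{Y}^n)$, the reverse hypercontractive estimate $\mathbb{E}_{W_{x^n}}[\ln T_{x^n,t} f] \ge \frac{1}{1-e^{-t}}\ln \mathbb{E}_{W_{x^n}}[f]$. Taking $f$ to be $\psi_i$ (or a suitable indicator obtained by thresholding $\psi_i$), averaging over $x^n$ with the codebook distribution, and summing over $i$, should produce the multiplicative factor $\frac{1}{1-e^{-t}}$ acting on $h$ in the final bound. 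The additive $(\alpha-1)t$ term is then expected to come from the bounded-density hypothesis $\|dW_x/dQ_Y\|_\infty \le \alpha$, which controls the per-coordinate cost of translating between the stationary measure $W_{x^n}$ (under which reverse hypercontractivity holds) and the marginal output distribution $P_{Y^n}$ (under which $H(I|Y^n)$ is actually evaluated); this is the semi-simple analogue of the additive $t$ term in Lemma~\ref{lem2}, which reflects the analogous cost for the OU semigroup in the Gaussian setting. A routine calculus minimization of $(\alpha-1)t + h/(1-e^{-t})$ over $t>0$ then yields the closed form stated in the Lemma.

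The main obstacle I anticipate is precisely this change-of-measure step. The factor $\alpha$ must enter only through the single additive $(\alpha-1)t$ rather than as a prohibitive global $\alpha^n$ prefactor, and achieving this delicate scaling --- presumably by exploiting the tensor-product structure of $T_{x^n,t}$ together with the single-letter bound on $dW_x/dQ_Y$, perhaps via a Gr\"onwall-type estimate on the simple-semigroup generator $Lf = W_x(f)-f$ --- is the technical heart of the argument. By contrast, once this step is in place, the minimization over $t$ and the explicit form of the constants $\alpha-1$ and $\frac{1}{1-e^{-t}}$ are largely forced upon us by Lemma~\ref{lem_mos}.
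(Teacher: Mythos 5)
Your overall framework is right --- semi-simple semigroup, reverse hypercontractivity via Lemma~\ref{lem_mos}, and an additive $(\alpha-1)t$ per coordinate rather than a multiplicative $\alpha^n$ --- but there are two concrete gaps, and the one you flagged as ``the technical heart'' is precisely the one you do not resolve.

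First, the function fed to the reverse hypercontractive estimate is wrong. You propose to use $\psi_i(y^n)=\Pr(I=i\mid Y^n=y^n)$ (or thresholded versions of it). That choice leads nowhere: the RHC inequality gives $W_{x^n}\!\left(\ln T_{x^n,t}\psi_i\right)\ge\frac{1}{1-e^{-t}}\ln W_{x^n}(\psi_i)$, but $W_{x^n}(\psi_i)$ is \emph{not} $\Pr(I=i\mid X^n=x^n)$, since the Markov chain is $I-Z^n-X^n-Y^n$ (not $I-Y^n-X^n$). The paper instead takes $f_i\in\mathcal{H}_{[0,1]}(\Omega^n)$ to be the relay's (possibly randomized) decoding function, i.e.\ the probability of producing message $i$ upon observing $z^n$. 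Because the channel is symmetric, $P_{Z^n|X^n=x^n}=W_{x^n}=P_{Y^n|X^n=x^n}$, so $W_{x^n}(f_i)=\Pr(I=i\mid X^n=x^n)$ \emph{exactly} and $H(I|X^n)=\mathbb{E}\bigl[\ln\tfrac{1}{W_{X^n}(f_I)}\bigr]$. The same $f_i$ can then be evaluated at $y^n$ (same space $\Omega^n$), which is what eventually ties the bound to $H(I|Y^n)$. This is where the symmetry hypothesis $P_{Y|X}=P_{Z|X}$ is actually used, and it is missing from your argument.

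Second, the mechanism producing the $(\alpha-1)t$ term is a domination argument, not a Gr\"onwall estimate on the generator or a change of measure from $W_{x^n}$ to $P_{Y^n}$. The operator $T_{x^n,t}=\otimes_{j=1}^n\!\bigl(e^{-t}+(1-e^{-t})W_{x_j}\bigr)$ depends on $x^n$; the paper replaces it by the $x^n$-independent operator $\Lambda_t:=\otimes_{j=1}^n\!\bigl(e^{-t}+\alpha(1-e^{-t})Q_Y\bigr)$, which dominates $T_{x^n,t}$ pointwise on $\mathcal{H}_+$ thanks to $\lVert dW_x/dQ_Y\rVert_\infty\le\alpha$. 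Crucially $\Lambda_t$ is no longer Markov: $\Lambda_t\mathbf{1}\le\bigl(e^{-t}+\alpha(1-e^{-t})\bigr)^n\le e^{(\alpha-1)nt}$, so normalizing $\Lambda_t f_i(y^n)$ into a genuine conditional distribution costs at most $(\alpha-1)nt$ in $\log$-scale, and Gibbs' inequality then gives $\mathbb{E}_{IY^n}\!\left[\ln\tfrac{1}{\Lambda_t f_I(Y^n)}\right]\ge H(I|Y^n)-(\alpha-1)nt$. This is the semi-simple analogue of the dilation / coupling-plus-EPI step in Lemma~\ref{lem2}, where translation-scaling invariance of the Gaussian allowed the $x^n$-independent convolution operator $T_t$ to remain Markov and the $nt$ cost appeared elsewhere. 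Without the $\Lambda_t$ construction and the $\Lambda_t\mathbf{1}$ bound, your proposal cannot produce the stated additive $(\alpha-1)t$.
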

\begin{rem}
In \cite[Lemma 7.1]{WuOzgurXie}, a weaker bound of 
\begin{align}
\frac{1}{n}H(I|Y^n)\le O\left(\sqrt{h}\ln\frac{1}{h}\right),\quad h\to 0
\end{align}
was derived using the blowing-up lemma for discrete memoryless channels. Here we got rid of the a logarithmic factor.
\end{rem}
\begin{rem}\label{e_iinfty}
Recall that the $\infty$-mutual information for a given $P_{Y|X}$ and $P_X$ is defined as 
\begin{align}
 I_{\infty}(X;Y):=\inf_{Q_Y}\ln \left\|\frac{{\rm d}P_{XY}}{{\rm d}(P_X\times Q_Y)}\right\|_{\infty};
\end{align}
see e.g.\ \cite{verdu2015alpha} and the references therein.
Under reasonable regularity conditions (e.g.\ $P_X$ is a fully supported distribution on a countably infinite set), 
we have $ I_{\infty}(X;Y)=\inf_{Q_Y}\ln \sup_x\left\|\frac{{\rm d}P_{Y|X=x}}{{\rm d}Q_Y}\right\|_{\infty}$
and hence the optimal $\alpha$ in \eqref{e35} is $e^{I_{\infty}(X;Y)}$.
Note that the particular choice of $P_X$ is immaterial for the calculation of $I_{\infty}(X;Y)$ as long as $P_X$ is fully supported.
Also note that in the case of Gaussian channel without a power constraint, it is not possible to find $Q_Y$ for which $\alpha$ defined in \eqref{e35} is finite.
\end{rem}

The above lemma immediately yields the following upper bound on the capacity of primitive relay channels. The proof is similar to the proof of Corollary~\ref{cor1}.

\begin{cor}\label{cor2}
Consider a stationary memoryless primitive relay channel with $P_{Y|X=x}=P_{Z|X=x}$ 
and suppose that the condition \eqref{e35} is satisfied. 
Let $\alpha$ and $c_{\alpha}(h)$ be as defined in Lemma~\ref{lem4}. Then the capacity satisfies
%\cout{
%\begin{align*}
 %C(C_0)  & \leq I(X;Y,Z) \\
%C(C_0)   & \leq   I(X;Y)+C_0  -h   \\
%C(C_0)   & \leq   I(X;Y)+ c_{\alpha}(h) -h 
%\end{align*}}
%\liu{Check if you agree with changing to this representation.}
\begin{align*}
 C(C_0)  & \leq \min\{I(X;Y,Z),\,
 I(X;Y)+C_0-c_{\alpha}^{-1}(C_0)\}
\end{align*}
for some random variable $X\in\mathcal{X}$ and $c_{\alpha}^{-1}(\cdot)$ denotes the inverse function.
\end{cor}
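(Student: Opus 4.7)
The plan is to follow the argument for Corollary~\ref{cor1} almost verbatim, replacing the Gaussian bound of Lemma~\ref{lem2} with Lemma~\ref{lem4} (valid under the bounded-density hypothesis~\eqref{e35}) and omitting the Gaussian-specific step where the input power constraint pins down the maximizing distribution. First, I would let $X^n$ be the random codeword (uniform over the codebook) and $Q$ an independent equiprobable index on $\{1,\dots,n\}$, exactly as in Proposition~\ref{main:prop}. Setting $h:=\frac{1}{n}H(I|X^n)$, the proposition yields
\begin{align*}
 R & \le I(X_Q;Y_Q,Z_Q)+\mu(\epsilon),\\
 R & \le I(X_Q;Y_Q)+\tfrac{1}{n}H(I|Y^n)-h+\mu(\epsilon).
\end{align*}

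Next I would control $\frac{1}{n}H(I|Y^n)$ by two independent estimates: the trivial bound $\frac{1}{n}H(I|Y^n)\le \frac{1}{n}\ln|\mathrm{range}(I)|\le C_0$ (since $I=f(Z^n)$ takes at most $\lceil e^{nC_0}\rceil$ values), and the reverse-hypercontractivity bound $\frac{1}{n}H(I|Y^n)\le c_\alpha(h)$ of Lemma~\ref{lem4}. Substituting the minimum of these into the second inequality gives
\begin{align*}
 R \le I(X_Q;Y_Q)+\min\{C_0,c_\alpha(h)\}-h+\mu(\epsilon).
\end{align*}

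The remaining task is to eliminate the internal parameter $h$. Here I would observe that $c_\alpha(h)-h$ is strictly increasing in $h\ge 0$: from $c_\alpha(h)=\min_{t>0}\{(\alpha-1)t+\frac{h}{1-e^{-t}}\}$ one reads off $c_\alpha(h)-h=\min_{t>0}\{(\alpha-1)t+\frac{h}{e^t-1}\}$, a pointwise minimum of affine functions of $h$ with strictly positive slopes, hence strictly increasing and continuous with $c_\alpha(0)=0$; therefore $c_\alpha$ is a bijection of $[0,\infty)$ onto itself and $c_\alpha^{-1}$ is well defined. Consequently, $\min\{C_0,c_\alpha(h)\}-h$ is maximized at the crossover point $h^{\star}=c_\alpha^{-1}(C_0)$, attaining the value $C_0-c_\alpha^{-1}(C_0)$. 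Letting $\epsilon\to 0$ (so $\mu(\epsilon)\to 0$) and identifying $X\equiv X_Q$, whose induced law is some distribution on $\mathcal{X}$, then yields both terms inside the minimum in the corollary. The only mildly delicate point is the monotonicity of $c_\alpha(h)-h$, which is immediate from the minimax representation, so no real obstacle arises beyond bookkeeping.
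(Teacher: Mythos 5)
Your proposal is correct and matches the paper's intended argument: the paper gives no separate proof for this corollary, stating only that it follows as in Corollary~\ref{cor1}, and your adaptation (Proposition~\ref{main:prop} plus the trivial bound $\frac{1}{n}H(I|Y^n)\le C_0$ and Lemma~\ref{lem4}, then optimizing $h$ at the crossover via monotonicity of $c_\alpha(h)-h$) is exactly that adaptation. Your minimax justification that $c_\alpha(h)-h=\min_{t>0}\{(\alpha-1)t+\frac{h}{e^t-1}\}$ is increasing is a nice touch mirroring the role of $c(h)-h$ in the Gaussian case, and the only point you gloss over (also glossed over by the paper) is the limiting argument identifying a single $X$ as $n\to\infty$.
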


\section{Proofs}\label{sec_proof}
\subsection{Proof of Lemma~\ref{lem2}}
\begin{proof}
For each integer (relay message) $i\in\mathcal{I}$, let
$f_i\in \mathcal{H}_{[0,1]}(\mathbb{R}^n)$ be the probability of sending $i$ upon observing the output of the relay (in the case of deterministic relay decoder, $f_i$ will be an indicator function).
Then
\begin{align}
\mathbb{P}[I=i|X^n=x^n]=W_{x^n}(f_i).
\end{align}
Hence
\begin{align}
H(I|X^n)=\mathbb{E}_{X^nI}
\left[\ln\frac{1}{W_{X^n}(f_I)}\right].
\label{e45}
\end{align}
But from \eqref{e_q0}, we see that for any $x^n$ and $i$,
\begin{align}
W_{x^n}\left(\ln(T_{x^n,t}f_i)\right)
\ge
\frac{1}{1-e^{-2t}}\ln W_{x^n}(f_i).
\label{e46}
\end{align}
To finish the proof, let us define a convolution operator $T_t$ that does not depend on $x^n$:
\begin{align}
T_{t}f(y^n)
:=\mathbb{E}[f(y^n
+\sqrt{1-e^{-2t}}V^n)].
\label{e_tx}
\end{align}
In words, the action of $T_{x^n,t}$ can be viewed as consisting of two steps: first $T_t$, and then dilate (with center $x^n$) by a factor $e^t$.
It is easy to see a basic fact: when a function is integrated against a measure, the integral is invariant if both the function and the measure contract by the same factor; this observation proves that
\begin{align}
W_{x^n}\left(\ln(T_{x^n,t}f_i)\right)
=\bar{W}_{x^n}\left(\ln(T_tf_i)\right)
\label{e48}
\end{align}
where we defined a new channel $\bar{W}_{x^n}=\mathcal{N}(x^n,e^{-2t}\mb{I}_n)$.
Now define a new Markov chain $I-Z^n-X^n-\bar{Y}^n$ where $P_{\bar{Y}^n|X^n}=\bar{W}$.
We have
\begin{align}
-H(I|\bar{Y}^n)&=
\mathbb{E}_{I\bar{Y}^n}[\ln P_{I|\bar{Y}^n}(I|\bar{Y}^n)]
\\
&\ge
\mathbb{E}_{I\bar{Y}^n}\left[\ln \left((T_tf_I)(\bar{Y}^n)\right)\right]
\label{e_re}
\\
&=\mathbb{E}_{IX^n}\left[\bar{W}_{X^n}\left(\ln (T_tf_I)\right)\right]
\\
&=\mathbb{E}_{IX^n}\left[
W_{X^n}\left(\ln(T_{X^n,t}f_I)\right)
\right]
\label{e_52}
\\
&\ge\frac{1}{1-e^{-2t}}\mathbb{E}_{IX^n}\left[\ln W_{X^n}(f_I)\right]
\label{e_53}
\\
&=-\frac{1}{1-e^{-2t}}H(I|X^n)
\label{e19}
\end{align}
where
\begin{itemize}
\item $\eqref{e_re}$ used the fact that for any $\bar{y}^n$,
\begin{align}
\sum_{i\in\mathcal{I}}(T_tf_i)(\bar{y}^n)
=\left(T_t\sum_{i\in\mathcal{I}}f_i\right)(\bar{y}^n)
=1.
\end{align}
Indeed, upon rearrangements, \eqref{e_re} is reduced to the nonnegativity of the conditional relative entropy
\begin{align}
D(P_{I|\bar{Y}^n}\|Q_{I|\bar{Y}^n}|P_{\bar{Y}^n})\ge 0
\end{align}
where we defined the conditional distribution $Q_{I|\bar{Y}^n}$ as $Q_{I|\bar{Y}^n}(i|\bar{y}^n)=T_tf_i(\bar{y}^n)$, for each $i,\bar{y}^n$.
\item \eqref{e_52} is from \eqref{e48}.
\item \eqref{e_53} is from \eqref{e46} and \eqref{e45}.
\end{itemize}
However,
\begin{align}
H(I|Y^n)-H(I|\bar{Y}^n)
&\le
h(Y^n|I)-h(\bar{Y}^n|I)
\label{e56}
\\
&=I(\sqrt{1-e^{-2t}}G^n;Y^n|I)
\label{e57}
\\
&\le
I(\sqrt{1-e^{-2t}}G^n;\sqrt{1-e^{-2t}}G^n+e^{-t}\bar{G}^n|I)
\\
&\le
nt
\end{align}
where \eqref{e56} can be seen from EPI; In \eqref{e57} we defined $G^n$ and $\bar{G}^n$ to be independent according to $\mathcal{N}(\mb{0},\mb{I}_n)$, and independent of $(I,X^n)$, and put
\begin{align}
\bar{Y}^n&=X^n+e^{-t}\bar{G}^n;
\\
Y^n&=\bar{Y}^n+\sqrt{1-e^{-2t}}G^n.
\end{align}
This constructs a coupling of
\begin{align}
I-X^n-\bar{Y}^n-Y^n
\end{align}
with the desired marginal distributions.
The conclusion then follows by optimizing $t$.
\end{proof}

\subsection{Proof of Lemma~\ref{lem3}}

\begin{proof} The high-level idea of the proof is roughly as follows:
in the proof of Lemma~\ref{lem2} we established the following result: $H(I|\bar{Y}^n)\le\frac{1}{1-e^{-2t}}H(I|X^n)$,
where $\bar{Y}^n$ is obtained by scaling the noise by a factor $e^{-t}$.
Then we bound $H(I|Y^n)$ in terms of $H(I|\bar{Y}^n)$.
If, instead, we somehow scale the noise by a factor of $e^t$ beforehand to cancel this effect, 
then we can directly obtain a bound on $H(I|Y^n)$.

Analogous to \eqref{e48}, define the conditional distribution
\begin{align}
\hat{W}_{x^n}:=\mathcal{N}(x^n,e^{2t}\mb{I}_n),
\end{align}
then we have the following scaling invariance:
\begin{align}
\hat{W}_{x^n}\left(\ln(\hat{T}_{x^n,t}f)\right)
=W_{x^n}\left(\ln(\hat{T}_tf)\right)
\end{align}
where $\hat{T}_{x^n,t}$ is the semigroup with stationary measure $\hat{W}_{x^n}$,
and $\hat{T}_t$ is defined by
\begin{align}
\hat{T}_{t}f(y^n)
:=\mathbb{E}[f(y^n
+e^t\sqrt{1-e^{-2t}}V^n)].
\end{align}
Then we can use the similar steps as before:
\begin{align}
H(I|X^n)
&=\mathbb{E}\left[\ln\frac{1}{W_{X^n}(f_I)}\right]
\\
&\ge
\mathbb{E}\left[\ln\frac{1}{\hat{W}_{X^n}(f_I)}\right]
-nt+\frac{n}{2}(1-e^{-2t})
\label{e_re1}
\\
&\ge(1-e^{-2t})\mathbb{E}
\left[\hat{W}_{X^n}\left(
\ln\frac{1}{\hat{T}_{X^n,t}f_I}
\right)
\right]
-nt+\frac{n}{2}(1-e^{-2t})
\\
&=(1-e^{-2t})\mathbb{E}
\left[W_{X^n}\left(
\ln\frac{1}{\hat{T}_tf_I}
\right)
\right]
-nt+\frac{n}{2}(1-e^{-2t})
\\
&=(1-e^{-2t})\mathbb{E}
\left[
\ln\frac{1}{\hat{T}_tf_I(Y^n)}
\right]
-nt+\frac{n}{2}(1-e^{-2t})
\\
&\ge(1-e^{-2t})H(I|Y^n)
-nt+\frac{n}{2}(1-e^{-2t}).
\end{align}
where
\begin{itemize}
  \item \eqref{e_re1} follows from
  \begin{align}
  \mathbb{E}\left[
  \ln\frac{W_{X^n}(f_I)}{\hat{W}_{X^n}(f_I)}
  \right]
  &=D(P_{I|Z^n}\circ W_{X^n}\|P_{I|Z^n}\circ \hat{W}_{X^n}|P_{X^n})
  \\
  &\le D(W_{X^n}\|\hat{W}_{X^n}|P_{X^n})
  \\
  &=nt-\frac{n}{2}(1-e^{-2t}).
  \end{align}
\end{itemize}
Thus
\begin{align}
\frac{1}{n}I(I;X^n|Y^n)
\le
\inf_{t>0}\left\{t-\frac{1}{2}+(\frac{1}{2}+\frac{1}{n}H(I|Y^n))e^{-2t}\right\}
=\frac{1}{2}\ln\left(1+\frac{2}{n}H(I|Y^n)\right).
\end{align}
\end{proof}
\subsection{Proof of Lemma~\ref{lem4}}
\begin{proof}
As before, define $(f_i)_{i\in\mathcal{I}}$ as the relay decoding probability functions as before. For any $t>0$,
define the linear operators
\begin{align}
T_{x^n,t}&:=\otimes_{i=1}^n(e^{-t}+(1-e^{-t})W_x);
\\
\Lambda_t&:=\otimes_{i=1}^n(e^{-t}+\alpha(1-e^{-t})Q_Y).
\end{align}
Note that since $\alpha\ge 1$, $\Lambda_t$ is not a conditional expectation operator in the sense that it can send the constant 1 function to a nonnegative function exceeding 1 somewhere.
However, we can show that the factor of increase is not to big:
\begin{align}
(\Lambda\cdot 1)(y^n)
&\le
(e^{-t}+\alpha(1-e^{-t}))^n
\\
&\le
e^{(\alpha-1)nt},\quad\forall y^n.
\label{e_notm}
\end{align}
The rest of the proof then follows analogously to the Gaussian case:
\begin{align}
H(I|X^n)
&=\mathbb{E}_{IX^n}\left[\ln\frac{1}{W_{X^n}(f_I)}\right]
\\
&\ge(1-e^{-t})\mathbb{E}_{IX^n}\left[
W_{X^n}\left(
\ln\frac{1}{T_{X^n,t}f_I}
\right)
\right]
\\
&\ge(1-e^{-t})\mathbb{E}_{IX^n}\left[
W_{X^n}\left(
\ln\frac{1}{\Lambda_t f_I}
\right)
\right]
\label{e_dominate}
\\
&\ge
(1-e^{-t})
\mathbb{E}_{IY^n}
\left[
\ln P_{I|Y^n}(I|Y^n)-(\alpha-1)nt
\right]
\label{e_notmuch}
\\
&\ge
(1-e^{-t})
\left[
H(I|Y^n)-(\alpha-1)nt
\right]
\end{align}
where \eqref{e_dominate} follows since $\Lambda_t$ dominates $T_{x^n,t}$, and
\eqref{e_notmuch} follows from \eqref{e_notm}.
\end{proof}

\section{Discussion}\label{sec_dis}
As mentioned in the introduction, recently Wu, Barnes and Ozgur \cite{WuBarnesOzgur} used the rearrangement inequalities to prove a tighter upper bound on the capacity of the Gaussian primitive relay channel which remains bounded away from the cutset bound also in the high relay rate regime (corresponding to the case of large $C_0$ or equivalently $\frac{1}{n}H(I|Y^n)$). (See \cite{ WuBarnesOzgur-general} for the treatment of the binary symmetric channel.)
Currently, the reverse hypercontractivity argument does not seem to be powerful enough in that regime.
In particular, the bound $$H(I|\bar{Y}^n)\le\frac{1}{1-e^{-2t}}H(I|X^n),$$ obtained in \eqref{e19} is looser than the trivial inequality $$H(I|\bar{Y}^n)\le H(I|X^n) + \sup_{P_{X^n}}I(I;X^n|\bar{Y}^n)$$ 
in the high entropy regime.
One possibility is that for highly symmetric measures (e.g.\ Gaussian), for which
a rearrangement inequality that characterizes the extremal sets/functions exists, the rearrangement approach is inherently
stronger than reverse hypercontractivity.
Another possibility is that there still exists certain semigroup argument that simplifies the proof \cite{WuBarnesOzgur}.
Indeed, we note that there exist semigroups versions of rearrangement inequalities which are known to imply (reverse) hypercontractive inequalities (see \cite[P117]{ledoux1996isoperimetry}). This remains an interesting open direction for future research.

\section{Acknowledgements}
Jingbo Liu would like to thank Professor Ramon van Handel and Sergio Verd\'u for their generous support, encouragement, and enlightening discussions during the time of this research.
In particular, the reference \cite[P117]{ledoux1996isoperimetry} mentioned in Section~\ref{sec_dis} was suggested by Ramon van Handel.
This work was supported by NSF grants CCF-1350595, CCF-1016625, CCF-0939370, CCF-1514538
and DMS-1148711, by ARO Grants W911NF-15-1-0479 and
W911NF-14-1-0094, and by the Center for Science of Information (CSoI), an NSF Science and Technology Center, under grant agreement CCF-0939370.

\bibliographystyle{ieeetr}
\bibliography{ref_maximal}

\end{document}